\documentclass[letterpaper,11pt]{article}
\usepackage[margin=1in]{geometry}
\usepackage[bookmarks, colorlinks=true, plainpages = false, colorlinks=true,
            citecolor=red,
            linkcolor=blue,
            anchorcolor=red,
            urlcolor=blue]{hyperref}
\usepackage{url}
\usepackage{amsmath,amsfonts,amsthm,amssymb,bm, verbatim,dsfont,mathtools}
\usepackage{color,graphicx,appendix}
\usepackage{etoolbox}

\usepackage{array}
\usepackage{multirow}

\usepackage{float}
\usepackage{tikz}
\usetikzlibrary{matrix,arrows,calc,shapes,backgrounds}
\usetikzlibrary{shapes.callouts,decorations.text}

\usetikzlibrary{shapes.misc}

\tikzset{cross/.style={cross out, draw=black, minimum size=2*(#1-\pgflinewidth), inner sep=0pt, outer sep=0pt},
cross/.default={1pt}}

\tikzstyle{int}=[draw, fill=blue!20, minimum size=2em]
\tikzstyle{dot}=[circle, draw, fill=blue!20, minimum size=2em]
\tikzstyle{dotred}=[circle, draw, fill=red!20, minimum size=2em]
\tikzstyle{init} = [pin edge={to-,thin,black}]
\tikzstyle{initred} = [pin edge={to-,thin,red}]
\tikzstyle{plan}=[draw, fill=blue!20, minimum size=2em, text width=5em, rounded corners,align=center]
\tikzstyle{planwide}=[draw, fill=blue!20, minimum size=2em, text width=8em, rounded corners,align=center]
\tikzstyle{nodedot}=[circle, draw, fill=white, minimum size=0.3cm,inner sep=0pt]
\tikzstyle{nodedot}=[circle, draw, fill=white, minimum size=3,inner sep=0pt]
\tikzstyle{Medge}=[green!60!black, thick]
\tikzstyle{Bedge}=[red, thick]
\tikzstyle{Cedge}=[blue, thick]
\tikzstyle{Sedge}=[black, thick]
\tikzstyle{Mgiantedge}=[green!60!black, line width=3.0pt]
\tikzstyle{Bgiantedge}=[red,line width=3.0pt]
\tikzstyle{Cgiantedge}=[blue,line width=3.0pt]
\tikzstyle{Sgiantedge}=[black,line width=3.0pt]
\tikzstyle{shadedgiantnode}=[circle, draw, fill=black!10, minimum size=1cm, inner sep=0pt]
\tikzstyle{unshadedgiantnode}=[circle, draw, fill=white, minimum size=1cm, inner sep=0pt]
\makeatletter
\tikzset{my loop/.style =  {to path={
  \pgfextra{}
  [looseness=5,min distance=10mm]
  \tikz@to@curve@path},font=\sffamily\small
  }}
\makeatletter
\newcolumntype{C}[1]{>{\centering\arraybackslash}p{#1}}

\tikzstyle{vertexdot}=[circle, draw, fill=white, minimum size=3,inner sep=0pt]
\tikzstyle{root}=[circle, draw, fill=black, minimum size=3,inner sep=0pt]

\tikzstyle{vertexdotsolid}=[circle, draw, fill=black, minimum size=3,inner sep=0pt]

\usepackage{pgfplots}

\pgfplotsset{
    standard/.style={
        axis x line=middle,
        axis y line=middle,
        every axis x label/.style={at={(current axis.right of origin)},anchor=north west},
        every axis y label/.style={at={(current axis.above origin)},anchor=north west}
    }
}
\usepgfplotslibrary{fillbetween}

\usepackage{xr,xspace}
\usepackage{todonotes}
\usepackage{paralist}
\usepackage{enumitem}

\usepackage{caption,subcaption,soul}
\usepackage[ruled]{algorithm2e}
\makeatletter
\usepackage{romanbar}
\usepackage{booktabs}
\usepackage{cleveref}
\usepackage{array}

\theoremstyle{plain}
\newtheorem{theorem}{Theorem}
\newtheorem{lemma}{Lemma}
\newtheorem{proposition}{Proposition}

\theoremstyle{definition}

\newtheorem{problem}{Problem}

\newtheorem{claim}{Claim}
\newtheorem*{remark*}{Remark}
\newtheorem*{theorem*}{Theorem}
\newtheorem{example}{Example}

\usepackage{xspace,prettyref}

\newcommand{\ones}{\mathbf 1}
\newcommand{\zeros}{\mathbf 0}
\newcommand{\reals}{{\mathbb{R}}}
\newcommand{\integers}{{\mathbb{Z}}}

\newcommand{\Expect}{\mathbb{E}}

\newcommand{\Prob}{\mathbb{P}}

\newcommand{\toprob}{\xrightarrow{\Prob}}

\def\Var{\mathrm{Var}}

\newrefformat{eq}{(\ref{#1})}
\newrefformat{chap}{Chapter~\ref{#1}}
\newrefformat{sec}{Section~\ref{#1}}
\newrefformat{alg}{Algorithm~\ref{#1}}
\newrefformat{fig}{Figure~\ref{#1}}
\newrefformat{tab}{Table~\ref{#1}}
\newrefformat{rmk}{Remark~\ref{#1}}
\newrefformat{clm}{Claim~\ref{#1}}
\newrefformat{claim}{Claim~\ref{#1}}
\newrefformat{def}{Definition~\ref{#1}}
\newrefformat{cor}{Corollary~\ref{#1}}
\newrefformat{lmm}{Lemma~\ref{#1}}
\newrefformat{prop}{Proposition~\ref{#1}}
\newrefformat{prob}{Problem~\ref{#1}}
\newrefformat{app}{Appendix~\ref{#1}}
\newrefformat{hyp}{Hypothesis~\ref{#1}}
\newrefformat{thm}{Theorem~\ref{#1}}
\newrefformat{fac}{Fact~\ref{#1}}
\newrefformat{ex}{Example~\ref{#1}}

\newcommand{\norm}[1]{\left\|{#1} \right\|}

\newcommand{\indc}[1]{{\mathbf{1}_{\left\{{#1}\right\}}}}

\newcommand{\diag}[1]{\mathsf{diag} \left\{ {#1} \right\} }

\newcommand{\calL}{{\mathcal{L}}}

\renewcommand{\bar}{\overline}

\newcommand{\bdx}{\boldsymbol{x}}

\newcommand{\bds}{\boldsymbol{s}}
\newcommand{\bdy}{\boldsymbol{y}}
\newcommand{\bdp}{\boldsymbol{p}}

\newcommand{\bdlambda}{\boldsymbol{\lambda}}

\usepackage{color-edits}
\addauthor[Mingwei]{mw}{purple}
\addauthor[Flore]{fs}{blue}

\newcommand{\weight}{W}

\title{When do token systems stabilize? Compatibility constraints and community structure}

\author{Flore Sentenac, Mingwei Yang, and Sophie H.\ Yu\thanks{F. Sentenac is with the Department of Information Systems and Operations Management, HEC Paris Business School, France, \texttt{sentenac@hec.fr}.
M. Yang is with the Department of Management Science and Engineering, Stanford University, Stanford CA, USA,
\texttt{mwyang@stanford.edu}.
S.\ H.\ Yu is with The Wharton School of Business, University of Pennsylvania, Philadelphia PA, USA,  \texttt{hysophie@wharton.upenn.edu}.
}}

\date{\today}

\begin{document}
\maketitle

\begin{abstract}
    Token systems support repeated service exchange without monetary transfers, but their viability depends on the stability of token balances.
    We study token systems with compatibility constraints among agents.
    A requester is randomly sampled in each period, and only a few randomly sampled agents are available to provide service.
    We establish a necessary and sufficient condition for stability and show that the minimum-token rule, which selects the available provider with the lowest token balance, achieves the optimal stability region.
    We then apply this stability criterion to a random token system generated from stochastic block models with uniformly sampled requesters.
    We tightly characterize the stability regime by identifying two obstructions to stability: insufficient service opportunities for a group of communities or for an individual agent.
\end{abstract}

\tableofcontents

\section{Introduction}
\label{sec:introduction}

Token systems provide a way to organize repeated service exchange when monetary transfers are unavailable or undesirable.
They have been implemented in babysitting cooperatives~\cite{sweeney1977monetary}, peer-to-peer systems~\cite{vishnumurthy2003karma}, and food banks~\cite{prendergast2022allocation}, among others.
In a typical token system, an agent earns a token by providing a service and spends a token when receiving one.
By recording past contributions, token systems provide a basis for sustaining cooperation and discouraging free riding when immediate or bilateral exchange is impractical.
A key concern, however, is the stability of token balances: agents who accumulate large token balances may have little incentive to continue providing service.

One application is cooperation between hospitals in kidney exchange.
Pooling patient--donor pairs across hospitals creates opportunities for exchanges that individual hospitals could not arrange on their own.
However, free riding has been documented in this setting: hospitals often match patient--donor pairs internally and submit primarily their harder-to-match pairs to centralized exchange platforms~\cite{ashlagi2014free,agarwal2019market,ashlagi2021kidney}.
This selective participation motivates the use of token systems that account for hospitals' contributions when assigning priority for future matching opportunities~\cite{ashlagi2026power}; see~\cite{hajaj2015strategy} for an earlier credit mechanism for multi-hospital kidney exchange.

Motivated by this setting, \cite{ashlagi2026power} study a stochastic model of repeated service exchange among a finite set of agents.
All agents start with a token balance of zero, and balances are allowed to become negative.
In each discrete period, a randomly chosen agent requests service, and $d$ potential providers are sampled independently with replacement.
These candidates represent the limited and fluctuating service opportunities available for a given request.
Among the sampled candidates, the agent with the lowest token balance provides the service and receives one token from the requester.

In the kidney-exchange interpretation, the requester is the hospital bringing an easy-to-match pair, while the potential providers are hospitals with compatible hard-to-match pairs.
An exchange decreases the token balance of the hospital contributing the easy-to-match pair and increases that of the hospital whose hard-to-match pair is matched.
Under this convention, a lower token balance corresponds to a larger net contribution to the shared pool.
By giving hospitals with lower balances priority for matches involving their hard-to-match pairs, the minimum-token rule rewards past contributions and can encourage hospitals to contribute easy-to-match pairs to the shared pool.

A central question for the viability of the system is whether token balances remain stable.
Formally, stability means that the initial all-zero state is positive recurrent: the expected time of the first return to this state after time zero is finite.
When both the requester and provider sampling distributions are uniform, \cite{ashlagi2026power} show that $d=1$ leads to instability, whereas $d=2$ already suffices for stability under the minimum-token rule.
Thus, a choice between just two potential providers can suffice to stabilize the system.
However, their model does not account for community structures in hospitals' exchange relationships.
Hospitals may, for example, restrict their partnerships to institutions with which they have previously cooperated, or institutions that are geographically close-by.
In this paper, we study how such community structures affect the stability of token systems.

\paragraph{Our contributions.}
We generalize the model of \cite{ashlagi2026power} by sampling the $d$ potential providers independently with replacement from a requester-specific distribution.
This notably captures compatibility constraints under which each agent can serve only a restricted subset of agents.
We establish a necessary and sufficient condition for the existence of an admissible stationary provider-selection rule that stabilizes the system (\Cref{thm:min-token-backpressure}).
Informally, the condition compares each subset's rate of requesting service with the probability that at least one of its members is sampled as a potential provider.
We also show that the minimum-token rule, described in \cite{ashlagi2026power} as natural and motivated by incentive considerations, is optimal for stability: whenever an admissible stationary provider-selection rule stabilizes the system, the minimum-token rule does as well.
In concurrent work, \cite{kerimov2026stability} obtains the corresponding characterization and optimality result when all requesters use the same strictly positive provider-sampling distribution.

To understand how community structure affects stability, we specialize our general stability criterion to stochastic block models~\cite{holland1983stochastic,DBLP:conf/focs/AbbeS15,abbe2018community}, in which agents are partitioned into communities and compatibility links are generated with probabilities determined by community membership.
We consider uniformly sampled requesters with $d\geq2$.
We first identify two obstructions to stability: a \emph{community obstruction}, when a group of communities has insufficient opportunities to provide service (\Cref{lem:block-cut-instability}), and an \emph{individual obstruction}, when a single agent has insufficient opportunities (\Cref{thm:instability-sbm}).
We then show that excluding both obstructions guarantees stability with high probability (\Cref{thm:stability-sbm}).
Our stability characterization covers all common degree scales: the individual obstruction is always present at sublogarithmic degree scales and, when the community condition holds, becomes vacuous at superlogarithmic degree scales, with the transition occurring at the logarithmic scale.

\subsection{Related literature}
\label{sec:related-work}

\paragraph{Token systems and service exchange.}
Many related models on token systems study service exchange with a fixed supply of tokens,
often called scrip, and nonnegative balances.
Requests may go unserved when the requester lacks the tokens needed
to pay for service~\cite{friedman2006efficiency,kash2012optimizing,
kash2015equilibrium,johnson2014analyzing}.
When agents maximize expected discounted benefits from receiving
services net of the costs of providing them,
\cite{friedman2006efficiency,kash2015equilibrium} establish
$\varepsilon$-Nash equilibria in sufficiently large populations
of sufficiently patient agents, in which agents volunteer only
when their token holdings fall below a threshold.
\cite{johnson2014analyzing} show that the minimum-scrip rule maximizes
welfare among provider-selection rules in their homogeneous model.
\cite{kanoria2024blind} study centralized control in such scrip systems.
Their policy selects a compatible provider with the lowest balance
and can reject requests to manage token circulation.
It achieves asymptotically optimal welfare without knowing demand
arrival rates.

In a slightly different vein, \cite{hu2026token} study favor trading
with dynamic entry and exit, and design token issuance, redemption,
and recycling mechanisms that can support an always-trade equilibrium.
Their mechanism bounds token holdings and uses monetary transfers
at balance boundaries and exit.

Our model builds on~\cite{ashlagi2026power}, where balances may be
negative and have no imposed bounds.
They show that the system is unstable with $d=1$, whereas the
minimum-token rule ensures stability with $d\geq2$ when requesters
and candidate providers are sampled uniformly.
They also give necessary and sufficient conditions for
stability with two agents who may differ in their request and
availability rates, assuming that candidate providers are sampled
independently of the requester's identity.
In concurrent work, \cite{kerimov2026stability} characterizes
the stability region for arbitrary strictly positive requester and
provider distributions under this independence assumption, and also
studies random service costs and stationary tail bounds.
Our focus is on requester-specific compatibility constraints
and stability in stochastic block models.

\paragraph{Kidney exchange.}
Credit mechanisms link successive matching rounds to encourage
hospitals to disclose their patient--donor
pairs~\cite{hajaj2015strategy} or to compensate countries whose past
transplant allocations fell below a prescribed fair
share~\cite{klimentova2021fairness,biro2020compensation}.
The opportunities to provide such compensation are constrained by
compatibility: hospitals and countries contribute pools that differ
in size and composition~\cite{ashlagi2026power,biro2020compensation}.
Random-graph models capture this heterogeneity through directed
compatibility edges between patient--donor pairs with type-dependent
probabilities~\cite{ashlagi2014free,ashlagi2019matching}.
These features motivate our study of whether token systems can
maintain balanced exchanges under heterogeneous compatibility
constraints.

\paragraph{Other types of nonmonetary exchange and credit.}
Besides token systems, cooperation and allocation without money are studied through repeated favor exchange~\cite{mobius2001trading,hauser2008trading,abdulkadiroglu2013trust}, artificial currencies~\cite{guo2009competitive,gorokh2021monetary}, karma mechanisms~\cite{censi2019today,elokda2024karma}, and promises of future allocations~\cite{balseiro2019multiagent}.
In credit networks, payments are routed along trust-graph paths with enough residual credit~\cite{dandekar2011liquidity}, possibly under per-agent borrowing caps~\cite{ramseyer2020liquidity}.
Our graph instead restricts who can serve whom, and we study positive recurrence under nonstrategic selection rules rather than incentives, efficiency, or liquidity.

\paragraph{Load balancing and network constraints.}
The minimum-token rule is the greedy rule of the power of $d$ choices~\cite{mitzenmacher2001survey}, which sharply reduces delays in queueing~\cite{vvedenskaya1996queueing,mitzenmacher2001power} and maximum loads in balls-into-bins~\cite{azar1999balanced}.
In the classical balls-into-bins model, the greedy rule is optimal among online $d$-choice strategies, as the minimum-token rule is for stability.
When candidate choices are constrained by a network, graphical variants draw candidates from a random edge~\cite{kenthapadi2006balanced,peres2015graphical,bansal2022power} or a hypergraph-structured random set~\cite{godfrey2008balls}.
Compatibility-constrained queueing is studied through large-system limits~\cite{mukherjee2018asymptotically,budhiraja2019supermarket,weng2020optimal,rutten2023load}.
In our model, the paying requester also determines the candidate distribution.

Graphical allocation has also been studied with signed, conserved loads: in a synchronous model with uniform-vertex removals, \cite{oleskertaylor2026graphical} bound equilibrium load differences on connected regular graphs.
In their asynchronous model, each arrival joins the (noisily) shortest nonnegative queue on its hyperedge, and an equilibrium exists if every vertex set serves faster than the hyperedges inside it receive arrivals.
The strict form of our cut condition~\eqref{eqn:cut-con-stability} gives the same subset-capacity inequality on complementary sets.

\paragraph{Queueing stability and scheduling.}
\cite{tassiulas1992stability} give throughput-optimal scheduling policies for constrained queueing networks, and \cite{tassiulas1993dynamic} show that serving the longest connected queue stabilizes parallel queues with random connectivity whenever any policy can.
A complementary question is how to characterize the stability region through subset-capacity conditions.
\cite{foley2001join} show that Markovian join-the-shortest-queue routing within server subsets is positive recurrent if and only if every subset has more capacity than the demand confined to it.
\cite{cruise2020stability} recover this characterization via Lyapunov drift.
Conditions of Hall type, similar to our cut condition, also arise in stochastic matching~\cite{caldentey2009fcfs}.
Match-the-longest is maximally stable in bipartite~\cite{busic2013stability} and non-bipartite~\cite{mairesse2016stability} matching models, as are generalized max-weight policies in the latter setting~\cite{jonckheere2023generalized}.

\paragraph{Stochastic block models.}
Stochastic block models, rooted in social network analysis~\cite{holland1983stochastic,wang1987stochastic}, provide a simple framework for networks with community structure.
Much of the literature~\cite{newman2018networks,abbe2018community} concerns recovering the blocks, for example by spectral clustering~\cite{lei2015consistency}.
We instead use the model to generate compatibility constraints, as do~\cite{sopranoloto2025online} for online matching in a dense regime.

\section{Generalized token systems}

We consider a general version of the model proposed by \cite{ashlagi2026power}.
There are $n$ agents, each with an integer token balance that may be positive or negative.
Write $s_{t,i}$ for the token balance of agent $i$ at the end of time $t$, and define $\bds_t:=(s_{t,1},\ldots,s_{t,n})^{\mathsf T}\in\integers^n$ as the vector of all agents' token balances.
At time $0$, each agent has $0$ tokens, i.e., $\bds_0 = \zeros$.
At each time $t \in \integers_{> 0}$, a service requester $r_t$ is sampled from a distribution $\bdp \in \reals_{> 0}^n$.
Conditional on $r_t$, we draw $d$ candidate service providers independently with replacement to form a candidate pool, which may contain repeated agents.

Specifically, let $\weight \in \reals_{\geq0}^{n \times n}$ be a weighted directed adjacency matrix of provider availability, with $\weight(i,j)$ denoting the weight of provider $j$ for requester $i$.
For $S\subseteq[n]$, write
\[
    \weight(i,S):=\sum_{j\in S}\weight(i,j),\qquad \weight(i,[n])>0\quad\text{for every }i.
\]
Requester $i$ samples providers identically and independently using probabilities $\weight(i,j)/\weight(i,[n])$.
In this model, we allow some agents to serve themselves, i.e. $W(i,i)>0$ is permitted.
Denote $A_t$ as the drawn set of potential providers at time $t$.

A provider-selection rule chooses $a_t\in A_t$ according to a time-homogeneous distribution that may depend only on $r_t$, $A_t$, and $\bds_{t-1}$.
The requester gives a token to the provider.
In other words, if $a_t \neq r_t$, then $s_{t, r_t} = s_{t - 1, r_t} - 1$ and $s_{t, a_t} = s_{t - 1, a_t} + 1$ with $s_{t, i} = s_{t - 1, i}$ for all $i \in [n] \setminus \{a_t, r_t\}$; otherwise, when $a_t = r_t$, we have $\bds_t = \bds_{t - 1}$.

Under an admissible stationary policy, we say that the token system is \emph{stable} if its initial zero-balance state is positive recurrent, that is, $\Expect_{\zeros}[\tau_{\zeros}^{+}]<\infty$, where $\tau_{\zeros}^{+}:=\inf\{t\geq1:\bds_t=\zeros\}$.
This implies that the communicating class containing $\zeros$ is closed and admits a stationary probability distribution.
Stability thus gives finite expected return to the initial balanced state and a stationary regime for token imbalances.

\subsection{Stability criterion}
\label{sec:stable-fix}

We first give a deterministic criterion to certify that a token system can be stabilized.
The remaining modeling choice is the provider-selection rule.
The minimum-token rule selects a candidate with the fewest tokens.
This is the rule studied by \cite{ashlagi2026power}, who break ties uniformly at random.
We show that the minimum-token rule stabilizes every system that can be stabilized by an admissible provider-selection rule.
The result holds for any admissible stationary tie-breaking rule.

Before stating the theorem, we introduce the following notion: for $\emptyset \neq S \subsetneq [n]$, we say that $S$ is \emph{sampling-separating} if
\begin{align*}
    W(i, S) =
    \begin{cases}
        W(i, [n]), & i \in S, \\
        0, & i \notin S.
    \end{cases}
\end{align*}
In other words, the agents in $S$ can only be served by themselves, and they cannot serve agents outside $S$.
The proof of \Cref{thm:min-token-backpressure} is deferred to \Cref{sec:proof-thm-min-token-backpressure}.

\begin{theorem}\label{thm:min-token-backpressure}
    If, for every $\emptyset\neq S\subsetneq[n]$,
    \begin{align}
        \Pr[A_t\cap S\neq\emptyset]
        \geq\sum_{i\in S}p_i,
        \label{eqn:cut-con-stability}
    \end{align}
    and each set attaining equality is sampling-separating, then the minimum-token rule stabilizes the system.
    Otherwise, no admissible stationary policy stabilizes the system.
\end{theorem}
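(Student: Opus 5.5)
The plan is to analyze the cut potentials $\Phi_S(\bds):=\sum_{i\in S}s_i$ for $\emptyset\neq S\subsetneq[n]$. In one step, tokens moving inside $S$ or inside $S^c$ cancel. Hence $\Phi_S$ changes by $+1$ exactly when $a_t\in S$ and $r_t\notin S$, and by $-1$ exactly when $r_t\in S$ and $a_t\notin S$. Its conditional drift is therefore $\Prob[a_t\in S\mid \bds_{t-1}]-\sum_{i\in S}p_i$, and this drift is at most $\Prob[A_t\cap S\neq\emptyset]-\sum_{i\in S}p_i$ under any admissible rule, with equality exactly when the rule picks a provider in $S$ whenever $A_t\cap S\neq\emptyset$.

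\textbf{Necessity.} Fix any admissible stationary policy and suppose it stabilizes the system. Decompose time into excursions from $\zeros$. The increments of $\Phi_S$ over one excursion sum to $0$, and the expected excursion length is finite. By renewal--reward, the stationary average drift of $\Phi_S$ is therefore $0$.

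If \eqref{eqn:cut-con-stability} fails strictly for some $S$, the pointwise drift is bounded above by a negative constant, which is a contradiction.

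Suppose instead that $S$ attains equality but is not sampling-separating. The pointwise drift is $\le 0$, so it must vanish at every state of the recurrent class. On that class $\Phi_S$ is then a martingale with increments in $\{-1,0,1\}$. Since $\Phi_S(\bds_{\tau^+_{\zeros}})=0$ and $\Expect_{\zeros}[\tau^+_{\zeros}]<\infty$, optional stopping applied to $\Phi_S^2-\sum_t\Expect[(\Delta\Phi_S)^2\mid\mathcal F_{t-1}]$ forces all increments up to $\tau^+_{\zeros}$ to be zero almost surely. This contradicts the first step, where $\Phi_S$ moves with positive probability. Indeed, failure of sampling-separation gives either an $i\in S$ with $W(i,S^c)>0$, in which case, with positive probability, all $d$ candidates lie in $S^c$, forcing a $-1$ increment; or an $i\notin S$ with $W(i,S)>0$, in which case, with positive probability, a candidate lies in $S$, and the vanishing-drift property forces the rule to pick it, giving a $+1$ increment.

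\textbf{Sufficiency.} I would first reduce to an irreducible core. If $S$ is sampling-separating, then no transaction crosses $S$. So $\Phi_S\equiv0$ along the trajectory, and the dynamics on $S$ and on $S^c$ are independent subsystems driven by requester probabilities proportional to $p|_S$ and $p|_{S^c}$. Repeatedly splitting along sampling-separating sets partitions $[n]$ into blocks $C$. No proper subset of a block is sampling-separating in the whole system, so every proper cut inside a block satisfies \eqref{eqn:cut-con-stability} strictly. (Equality within a block's own subsystem would be equality in the whole system, because $T\subseteq C$ has $\Prob[A_t\cap T\neq\emptyset]$ supported on requesters in $C$.) Let $\varepsilon>0$ be the minimal slack over these finitely many cuts.

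For the minimum-token rule I would use Foster's criterion with $V(\bds)=\sum_i s_i^2$. A step changes $V$ by $2(s_a-s_r)+2$ when $a_t\neq r_t$. Within one block, sort the balances as $s_{(1)}\le\dots\le s_{(m)}$, write $g_k=s_{(k+1)}-s_{(k)}$, and let $S_k$ be the $k$ lowest agents. Because the rule picks a provider in $S_k$ whenever $A_t\cap S_k\neq\emptyset$, the layer-cake identity gives
\[
\Expect[s_a-s_r\mid\bds]=-\sum_{k=1}^{m-1}g_k\Bigl(\Prob[A_t\cap S_k\neq\emptyset]-\sum_{i\in S_k}p_i\Bigr)\le-\varepsilon\sum_k g_k ,
\]
which holds for any tie-breaking. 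Since $\Phi_C=0$, we have $\max_{i\in C}|s_i|\le\sum_k g_k$. The drift of $V$ is therefore at most $-2\varepsilon\,\max_i|s_i|\cdot\min_C p(C)+2$, which is negative outside a finite set $F$ of reachable states. Foster's criterion then gives a finite expected hitting time of $F$.

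The main obstacle is upgrading "$F$ is hit quickly" to "$\zeros$ is positive recurrent". This needs $\zeros$ to be reachable from every state of $F$ reachable from $\zeros$; Foster then applies to the closed class containing $\zeros$. I would prove reachability by showing that from any state with all $\Phi_C=0$ and some nonzero balance, there is a positive-probability step that strictly decreases $V$. Take $S_k$ to be the set of agents in a block with the minimal balance. By the strict cut inequality, either some requester outside $S_k$ samples only candidates in $S_k$, or some requester in $S_k$ samples only outside. Either event has positive probability and moves a token from the top part to the bottom part. Then iterate, using that $V$ is a nonnegative integer.
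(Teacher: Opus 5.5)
Your necessity argument and the block-wise drift bound for $V$ are sound. For necessity you take a different route from the paper, which treats the strict and equality cases together by optional sampling of the supermartingale $\sum_{i\in S}s_{t,i}$ on the event $\{r_1\in S,\ A_1\cap S=\emptyset\}$. In your version, passing to the limit in $\Expect[\Phi_S(\bds_{t\wedge\tau})^2]$ needs Doob's $L^2$ inequality, since $\tau^2$ need not be integrable.

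The genuine gap is in the reachability step you flag as the main obstacle. The event you use is a requester $i\notin S_k$ whose $d$ draws all land in the minimal-balance set $S_k$. It moves a token from $i$ to some $a\in S_k$ and changes $V$ by $2(s_a-s_i)+2$, which is only $\leq 0$. It equals $0$ when $s_i=s_a+1$, where the step merely swaps two balances. Your second alternative, a requester in $S_k$ drawing only outside $S_k$, moves a token from the bottom to the top and \emph{increases} $V$. It is also never needed, since the strict inequality for $S_k$ already gives $\Pr(r_t\notin S_k,\ A_t\cap S_k\neq\emptyset)>0$.

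In fact the claimed one-step lemma is false. Take three agents on the path $1$--$2$--$3$, with uniform requesters, uniform sampling from closed neighbourhoods, and $d=2$; every strict cut inequality holds. The state $(-1,0,1)$ is reachable from $\zeros$ (via $(-1,1,0)$). From it, every transition of the minimum-token rule leads to a state with $V\in\{2,6\}$, so no step decreases $V$ below its current value $2$. Your iteration can therefore cycle, e.g.\ $(-1,0,1)\to(0,-1,1)\to(-1,0,1)$.

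What is missing is a multi-step move, and it needs a structural fact you never derive. Within each block $C$, the directed graph with arcs $i\to j$ whenever $W(i,j)>0$ is strongly connected. Indeed, a source strongly connected component $S\subsetneq C$ receives no arcs from outside $S$. Hence $A_t\cap S\neq\emptyset$ forces $r_t\in S$, so $\Pr[A_t\cap S\neq\emptyset]\leq\sum_{i\in S}p_i$, contradicting the strict cut inequality.

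Next, use the events $\{r_t=i,\ A_t=\{j,\ldots,j\}\}$, which have probability $p_i\bigl(W(i,j)/W(i,[n])\bigr)^d>0$. They force the transfer $i\to j$ regardless of the state and the tie-breaking. With them you can route one token along a directed path from a maximal-balance agent to a minimal-balance agent of a nonzero block. Intermediate balances are unchanged. Since $s_{\max}\geq 1$ and $s_{\min}\leq -1$, the total change in $V$ is $2-2(s_{\max}-s_{\min})\leq -2$, and your integer-descent argument then goes through.

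Alternatively, as the paper does, use these forced transfers to prove irreducibility directly on the whole lattice of states with zero total balance in every block.
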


The proof of the stability of the minimum token rule  proceeds by showing that the quadratic Lyapunov function admits a negative drift under the minimum-token rule, and this argument is closely related to the analysis of the backpressure algorithm for multi-hop queueing networks~\cite{tassiulas1992stability,georgiadis2006resource}.
In particular, both the minimum-token rule and the backpressure algorithm can be naturally derived by greedily minimizing the drift of the quadratic Lyapunov function.
This negative drift is shown by establishing that, for each subset $S$, if \eqref{eqn:cut-con-stability} holds strictly, whenever all agents in $S$ own too few tokens, their overall probability of providing services exceeds their overall probability of requesting services, so that the agents in $S$ will be gaining tokens on average.
We remark that a closely related summation-by-parts decomposition and a quadratic-drift argument also appear in the concurrent work of \cite[Proposition~1 and the proof of Theorem~1]{kerimov2026stability}.

On the other hand, if \eqref{eqn:cut-con-stability} fails, the total number of tokens owned by agents in $S$ has a uniformly negative drift.
If equality holds and $S$ is not sampling-separating, that total is bounded above pathwise by a non-degenerate balanced random walk.
Both cases rule out positive recurrence, as shown in \Cref{sec:proof-thm-min-token-backpressure}; equality at a sampling-separating cut is harmless because its token total is conserved.

The following example shows that even a small change in the compatibility graph can alter stability.

\begin{example}\label{exa:unstable-k14}
    Consider five agents arranged as the star graph $K_{1, 4}$, where agent $1$ is the center, and the remaining agents are leaves.
    Suppose that the requester is sampled uniformly at random, and $d = 2$.
    Each available service provider is sampled uniformly at random from the closed neighborhood of the requester.
    We show that the system is unstable by applying \Cref{thm:min-token-backpressure}, as witnessed by the cut $S = \{2, 3, 4, 5\}$:
    \begin{align*}
        \Pr[A_t \cap S \neq \emptyset]
        &= \frac{1}{5} \Pr[A_t \cap S \neq \emptyset \mid r_t = 1] + \frac{4}{5} \Pr[A_t \cap S \neq \emptyset \mid r_t \in S] \\
        &= \frac{1}{5} \cdot \frac{24}{25} + \frac{4}{5} \cdot \frac{3}{4}
        = \frac{99}{125}
        < \frac{100}{125}
        = \sum_{i \in S} p_i.
    \end{align*}
    On the other hand, one can easily verify by applying \Cref{thm:min-token-backpressure} that the system consisting of four agents arranged as the star graph $K_{1, 3}$ is stable.
\end{example}

\section{Token systems with community structures}
\label{sec:sbm-model}

Motivated by token systems where service opportunities can depend on social or organizational groups, a natural next step is to look at how common graph structures fare in terms of stability.
For example, babysitting cooperatives may coordinate requests through separate sections~\cite{sweeney1977monetary}, while social clouds permit sharing within friendship networks or specified groups~\cite{chard2010social}.
In addition, block structures have also been identified in time-bank transactions using stochastic block models~\cite{gluckler2022timebanks}.
We use a symmetric stochastic block model~\cite{holland1983stochastic,DBLP:conf/focs/AbbeS15,abbe2018community} to study how group-dependent service opportunities affect token-balance stability.

After introducing the model in \Cref{sec:sbm-setup}, we show that stability with high probability is governed by two obstructions.
We first describe the \emph{community obstruction} in \Cref{sec:sbm-community}: a group of whole communities requests service more often than at least one of its members appears among the available providers.
We then describe the \emph{individual obstruction} in \Cref{sec:sbm-individual-obstruction}, where a single agent appears among the available providers less often than she requests service.
To clearly convey the intuition behind individual obstructions, we start with Erd\H{o}s--R\'enyi graphs in \Cref{sec:sbm-er}, and then turn to the general stochastic block models in \Cref{sec:sbm-individual}.
Finally, we show in \Cref{sec:sbm-main} that the absence of both obstructions guarantees stability with high probability.

\subsection{Model and stability condition}
\label{sec:sbm-setup}

Consider a community structure with $m$ disjoint blocks $H_1,\ldots,H_m$, denote $n_u = |H_u|$ and $\frac{n_u}{n}=\alpha_u$ for each $u \in [m]$, and suppose that for all agents $i\neq j$ with $i \in H_u$ and $j \in H_v$,
\[
\mathbb{P}\left(W(i,j)=1\right)= \mathbb{P}\left(W(j,i)=1\right)= \frac{c_{uv} L_n}{n} =: Q_{uv}
\]
for some constant $c_{uv}=c_{vu}>1$.
Edges for distinct unordered pairs are sampled independently.
Throughout this section, $m\geq1$, $d\geq2$, the positive proportions $\alpha_u$, and the coefficients $c_{uv}$ are fixed, with $n_u=\alpha_un$ along admissible integers.
We allow any deterministic sequence $L_n\geq L_0>0$ such that $c_{\max}L_n/n\leq1$, where $c_{\max}:=\max_{u,v}c_{uv}$.
Also suppose that for each $i \in [n]$,
\[
p_i=\frac{1}{n} \text{ and }W(i,i)=1.
\]
Self-service, i.e. $W(i,i)=1$,  is imposed to ensure that the system is well-defined even with isolated vertices.
Although we do not pursue this extension here, similar results could be derived without this technical assumption, since graphs in the stability regime of the model above have no isolated vertices with high probability.

Let $G$ denote the resulting undirected graph on $[n]$ (with the self-loops removed), let $N(u)$ be the set of neighbors of agent $u$ in $G$, $N(u,H_j):=N(u)\cap H_j$, $\bar N(u):=N(u)\cup\{u\}$, and let $a_u:=W(u,[n])=|\bar N(u)|$ be the closed degree of $u$.

For $x\in[0,1]$, define $h_d(x):=1-(1-x)^d$.
This is the probability that at least one of $d$ independent draws hits a set having probability $x$ under a single draw.
In particular, for a fixed graph realization,
\[
 \Pr(A_t\cap S\neq\emptyset\mid r_t=u)
 =h_d\!\left(\frac{W(u,S)}{a_u}\right).
\]

Since requests are uniformly distributed, \Cref{thm:min-token-backpressure} implies stability under the minimum-token rule if
\begin{align}
    \frac{1}{n} \sum_{u=1}^n h_d\left(\frac{W(u,S)}{a_u}\right)> \frac{|S|}{n},
    \qquad\forall\,\emptyset\neq S\subsetneq[n];
    \tag{Cut-Uniform}
    \label{eq:er-closed-neighborhood-cut-SBM}
\end{align}
in addition, a strict reverse inequality for even one cut implies instability under every admissible provider-selection rule.
Although this criterion is difficult to interpret directly, we show that it reduces to two conditions with clear interpretations.

\subsection{Community obstruction}
\label{sec:sbm-community}

Consider a cut consisting of several whole communities.
If these communities request service more often than any of their members appears among the candidates, no provider-selection rule can prevent their aggregate token amount from drifting downward.
This is the stability requirement at the global scale, when the deficient set consists of whole communities.

To formally express this obstruction, define
\[
    R(i,j):=\frac{\alpha_jc_{ij}}{\sum_{k=1}^m\alpha_kc_{ik}},
\]
which, when $L_n\to\infty$, is the asymptotic probability that a candidate drawn by a requester in community $i$ lies in community $j$.
Note that the matrix $R$ is row-stochastic and has positive entries.
Let $\bdx\in\{0,1\}^m$ indicate the communities in a cut $S=\bigcup_{i:x_i=1}H_i$.
Then, $(R\bdx)_i$ is the asymptotic probability that one candidate drawn by a requester in community $i$ lies in $S$.
When $L_n\to\infty$, the LHS of Condition~\eqref{eq:er-closed-neighborhood-cut-SBM} for cut $S$ can be written as
\[
    \frac1n\sum_{u=1}^n h_d\left(\frac{W(u,S)}{a_u}\right)
    \approx \frac{1}{n} \sum_{i=1}^m n_i h_d((R \bdx)_i)
    =\sum_{i=1}^m\alpha_i h_d((R\bdx)_i),
\]
while the RHS is $|S|/n=\sum_i\alpha_i x_i$.
Therefore, \eqref{eq:er-closed-neighborhood-cut-SBM} for all cuts consisting of whole communities gives
\begin{align}
    \sum_{i=1}^m\alpha_i h_d((R\bdx)_i)
    >\sum_{i=1}^m\alpha_i x_i,
    \qquad\forall\,\bdx\in\{0,1\}^m\setminus\{\zeros,\ones\}.
    \tag{Block-Cut}
    \label{eqn:bc-general-sbm}
\end{align}

We show in the following proposition that a strict reverse inequality in \eqref{eqn:bc-general-sbm} implies instability with high probability under every admissible provider-selection rule.
As a result, \eqref{eqn:bc-general-sbm} is, up to the equality case, a necessary condition for stability.
The proof of \Cref{lem:block-cut-instability} is given in \Cref{sec:proof-lem-block-cut-instability}.

\begin{proposition}\label{lem:block-cut-instability}
    Suppose that $G$ is sampled from the stochastic block model and $d\geq2$.
    If, for some $\bdx\in\{0,1\}^m\setminus\{\zeros,\ones\}$,
    \begin{align}
        \sum_{i=1}^m\alpha_i h_d((R\bdx)_i)
        <\sum_{i=1}^m\alpha_i x_i,
        \label{eqn:bc-general-violate}
    \end{align}
    then the system is unstable with high probability under every admissible provider-selection rule.
\end{proposition}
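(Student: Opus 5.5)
The plan is to apply the instability half of \Cref{thm:min-token-backpressure} to the cut $S=\bigcup_{i:x_i=1}H_i$. It suffices to show that, with high probability over $G$,
\[
    \frac1n\sum_{u=1}^n h_d\left(\frac{W(u,S)}{a_u}\right)<\frac{|S|}{n}=\sum_i\alpha_ix_i .
\]
Set $\delta:=\sum_i\alpha_ix_i-\sum_i\alpha_ih_d((R\bdx)_i)>0$, a fixed constant. We must show the left side is at most $\sum_i\alpha_ih_d((R\bdx)_i)+\delta/2$ with high probability. This has to hold for every admissible sequence $L_n\geq L_0$, including bounded $L_n$, where the heuristic $W(u,S)/a_u\approx(R\bdx)_u$ fails for individual vertices.

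The key tool is concavity of $h_d$ on $[0,1]$; note $h_d''(x)=-d(d-1)(1-x)^{d-2}\le0$. Jensen's inequality lets us replace the random fraction by something controlled by expectations. For $u\in H_i$, write $Y_u:=W(u,S)$ and $Z_u:=W(u,S^c)$, counting the self-loop in whichever side contains $u$. Then $Y_u$ and $Z_u$ are independent sums of independent Bernoullis, and $\frac{1}{n_i}\sum_{u\in H_i}h_d\bigl(Y_u/(Y_u+Z_u)\bigr)\le h_d\bigl(\frac1{n_i}\sum_{u\in H_i}Y_u/(Y_u+Z_u)\bigr)$.

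It then remains to show that the empirical mean of $F_u:=Y_u/(Y_u+Z_u)$ over $H_i$ is at most $(R\bdx)_i+o(1)$, or at least that it does not push the total above the threshold. Two steps are needed.

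\textbf{Expectation.} I would show $\Expect F_u\le (R\bdx)_i+o(1)$, or a bound that suffices. When $L_n\to\infty$, concentration of $Y_u$ and $Z_u$ around $L_n\sum_j\alpha_jc_{ij}x_j$ and $L_n\sum_j\alpha_jc_{ij}(1-x_j)$ gives $F_u\to(R\bdx)_i$. When $L_n$ is bounded, $F_u$ genuinely fluctuates, and this is the main obstacle. The self-loop biases $F_u$ toward $x_i$, so the expectation can differ from $(R\bdx)_i$ by a constant amount. Jensen alone then does not close the gap.

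For this regime I would exploit more structure. Summing over all blocks, the ``mass'' $\sum_u F_u$ with $u\notin S$ pointing into $S$, and with $u\in S$ pointing out, should be compared via a two-point symmetry argument. Failing that, I would restrict the claim to $L_n\to\infty$, or to $L_n$ beyond a large constant depending on $\delta$. In that setting the relative error of $Y_u/L_n$ and $Z_u/L_n$ is $O(L_n^{-1/2})$ in $L^1$, and the self-loop contributes $O(1/L_n)$, making $\Expect F_u=(R\bdx)_i+O(L_n^{-1/2})$. I expect the paper's statement to implicitly need $L_n$ large, or to handle bounded $L_n$ by a similar expectation bound, since $h_d$ concave with $h_d(0)=0$ also gives $h_d(F)\le F\,h_d'(0)$-type bounds.

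\textbf{Concentration.} Next I would show that $\frac1{n_i}\sum_{u\in H_i}h_d(F_u)$ concentrates around its mean. Each $h_d(F_u)\in[0,1]$, and each edge affects only two summands, so the sum is a function of independent edge indicators with bounded differences of $2/n_i$ per edge. McDiarmid's inequality is too weak over the $\Theta(n^2)$ indicators. Instead I would use a variance bound: $\Var\bigl(\sum_u h_d(F_u)\bigr)=O(n)$, because $F_u$ and $F_v$ are dependent only through the single edge $uv$, whose influence has covariance $O(1/(L_n n))$ per pair. Chebyshev then gives deviation $o(1)$ with high probability.

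Combining these yields the left side $\le\sum_i\alpha_ih_d((R\bdx)_i)+\delta/2<|S|/n$ with high probability. \Cref{thm:min-token-backpressure} then gives instability under every admissible rule.
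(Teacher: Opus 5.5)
The $L_n\to\infty$ part of your plan is fine and is essentially the paper's argument. The paper is slightly simpler: since $h_d$ is $d$-Lipschitz and $W(u,S)/a_u\to(R\bdx)_i$ in mean uniformly over each type, Markov's inequality applied to $\frac1n\sum_u|W(u,S)/a_u-(R\bdx)_i|$ already suffices, so neither Jensen nor the covariance computation is needed.

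\textbf{The gap is the bounded-$L_n$ regime.} The statement covers it, since every $L_n\geq L_0>0$ is admissible. Restricting to large $L_n$ would prove a weaker result. Your other option, showing that the block cut $S$ itself is deficient, cannot work in general. With $F_u=W(u,S)/a_u$, one has $\sum_u h_d(F_u)-|S|=\sum_{u\notin S}\bigl(1-(1-F_u)^d\bigr)-\sum_{u\in S}(1-F_u)^d$. Take $L_n\equiv L_0$ small. Then almost all non-isolated components are single edges. An edge inside $S$ or inside $S^c$ contributes $0$. An isolated crossing edge has $F_u=F_v=1/2$ and contributes $(1-2^{-d})-2^{-d}=1-2^{1-d}>0$. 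There are $\Theta(L_0n)$ such edges and only $O(L_0^2n)$ vertices in larger components. Hence w.h.p.\ $S$ satisfies the cut inequality strictly, even though \eqref{eqn:bc-general-violate} holds, because that hypothesis does not involve $L_n$. So no two-point symmetry or refined expectation bound for this $S$ can close the argument.

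\textbf{The missing idea is to change the witnessing cut.} For bounded $L_n$, the paper uses \Cref{lem:bounded-degree-obstruction}. Take $k=d2^d$. A second-moment count shows that w.h.p.\ $G$ contains isolated stars $K_{1,k}$ (in fact $\Theta(n)$ of them). For a leaf $u$ of such a star, the singleton $\{u\}$ gives $\sum_v h_d(W(v,u)/a_v)=h_d(1/2)+h_d(1/(k+1))\leq 1-2^{-d}+d/(k+1)<1$. This violates \eqref{eq:er-closed-neighborhood-cut-SBM}, and \Cref{thm:min-token-backpressure} yields instability. Here instability comes from the individual obstruction rather than the block violation.

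Finally, a general admissible $L_n$ may oscillate. The two regimes must therefore be combined by noting that every subsequence has a further subsequence on which $L_n$ is bounded or tends to infinity. Your plan omits this step too: a bound valid for $L_n\geq M(\delta)$ would still leave the bounded stretches uncovered.
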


\subsection{Individual obstruction}
\label{sec:sbm-individual-obstruction}

\subsubsection{Erd\H{o}s--R\'enyi graphs}
\label{sec:sbm-er}

To see that the block condition \eqref{eqn:bc-general-sbm} alone is not enough, suppose that $m=1$, $c_{11}=\lambda$, and $L_n=\ln n$.
That is exactly an Erd\H{o}s--R\'enyi graph with edge probability $\lambda \ln(n) / n$.
There is only one community, so \eqref{eqn:bc-general-sbm} is vacuous, yet, as we will see, stability still depends on $\lambda$.
In other words, \eqref{eqn:bc-general-sbm} does not prevent an individual agent from being sampled too rarely, revealing an obstruction at the local scale.

To formally present the obstruction, recall that proving stability entails examining the condition in \eqref{eq:er-closed-neighborhood-cut-SBM} for all cuts.
For Erd\H{o}s--R\'enyi graphs, we show that singleton cuts serve as the bottleneck when determining the degree threshold for stability.

By Taylor expansion, the LHS of Condition~\eqref{eq:er-closed-neighborhood-cut-SBM} for singleton cut $\{u\}$ can be approximated by
\begin{align*}
    \frac{1}{n} \sum_{v=1}^n h_d\left(\frac{W(v,u)}{a_v}\right)
    \approx \frac{d}{n} \sum_{v=1}^n \frac{W(v, u)}{a_v}.
\end{align*}
Define
\begin{align}
    B_u:=\sum_{v=1}^n\frac{W(v,u)}{a_v}
    =\sum_{v\in\bar N(u)}\frac1{a_v}
    \approx \frac{a_u}{\lambda \ln n},
    \label{eqn:def-column-exposure-er}
\end{align}
and note that $B_u/n$ equals the probability of $u$ appearing in one candidate draw, averaged over the uniform requester.
We call $B_u$ the \emph{column exposure} of agent $u$.

For a singleton cut $\{u\}$, \eqref{eq:er-closed-neighborhood-cut-SBM} requires
\[
 \frac{1}{n} \sum_{v=1}^n h_d\left(\frac{W(v,u)}{a_v}\right)
 \approx \frac{d}{n} B_u
 > \frac{1}{n},
\]
and hence $dB_u < 1$ should imply that \eqref{eq:er-closed-neighborhood-cut-SBM} cannot be satisfied.
To formalize this statement, the approximation in \eqref{eqn:def-column-exposure-er} suggests comparing the minimum normalized degree, defined as $\min_{u \in [n]} a_u / \lambda\ln n$, with $1/d$.
Standard minimum-degree estimates for Erd\H{o}s--R\'enyi graphs (see, e.g.,~\cite[Chapter~3]{bollobas2001random} and~\cite[Lemma~3.4]{kolokolnikov2014algebraic}) give
\[
    \min_{u \in [n]}\frac{a_u}{\lambda\ln n}\toprob\tau,
    \qquad I(\tau)=1,\qquad
    I(t):=\lambda(1-t+t\ln t),
\]
with $0\ln0=0$.
Since $I$ is strictly decreasing on $(0,1)$,
\begin{align*}
    \tau>\frac1d
    \quad\iff\quad I\left(\frac1d\right)>1
    \quad\iff\quad\lambda>\left(1-\frac{1+\ln d}{d}\right)^{-1},
\end{align*}
resulting in the degree threshold for stability.

We formalize the stability criterion for Erd\H{o}s--R\'enyi graphs in the following theorem.

\begin{theorem}\label{thm:special-equal-degree-coeffi}
    Fix $d\geq2$ and $\lambda>1$, and let $G$ be sampled from the stochastic block model with $m=1$, $c_{11}=\lambda$, and $L_n=\ln n$.
    Define
    \[
        \lambda_d^*:=\left(1-\frac{1+\ln d}{d}\right)^{-1}.
    \]
    If $\lambda>\lambda_d^*$, the minimum-token rule stabilizes the system with high probability.
    If $\lambda<\lambda_d^*$, the system is unstable with high probability under every admissible provider-selection rule.
\end{theorem}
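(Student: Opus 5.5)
The proof reduces both directions to \eqref{eq:er-closed-neighborhood-cut-SBM} through \Cref{thm:min-token-backpressure}. Instability comes from a single deficient singleton cut. Stability requires checking every cut, split by the size of $S$.

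\emph{Instability for $\lambda<\lambda_d^*$.} In this regime $I(1/d)<1$. Choose $t$ with $t<1/d$ and $I(t)<1$. By the cited minimum-degree estimates, with high probability some vertex $u$ has $a_u\le t\lambda\ln n$. I also want its neighbors to have typical degree. Standard second-moment or local-structure arguments give this: w.h.p.\ every vertex $v$ with $a_v\ge \epsilon \ln n$ satisfies $a_v\ge(1-\epsilon)\lambda\ln n$ except for $O(n^{1-\delta})$ exceptional vertices, and a low-degree vertex whose neighborhood avoids them can be selected. For such $u$ and each $v\in\bar N(u)$ other than $u$, the term $h_d(1/a_v)\le d/a_v\le d/((1-\epsilon)\lambda\ln n)$. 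The self term contributes $h_d(1/a_u)=O(1/\ln n)$. So
\[
\sum_v h_d\bigl(W(v,u)/a_v\bigr)\le \frac{d\,t\lambda\ln n}{(1-\epsilon)\lambda\ln n}+o(1)<1
\]
for small $\epsilon$. The singleton cut $\{u\}$ therefore violates \eqref{eqn:cut-con-stability} strictly, and \Cref{thm:min-token-backpressure} gives instability under every admissible rule. Controlling neighbor degrees of the minimum-degree vertex is the only delicate point here. A cleaner alternative is to count vertices of degree $\le t\lambda\ln n$ all of whose neighbors have degree $\ge(1-\epsilon)\lambda\ln n$ via first and second moments.

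\emph{Stability for $\lambda>\lambda_d^*$.} Fix $t>1/d$ with $I(t)>1$. W.h.p.\ the following hold: (i) $a_u\ge t\lambda\ln n$ for all $u$; (ii) $a_u\le K\lambda\ln n$ for all $u$; (iii) a uniform concentration for the typical degree; (iv) the expansion and edge-count bounds below. I split cuts into small $S$ with $|S|\le \epsilon n/\ln n$, medium, and large sets with complement small.

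\emph{Small $S$.} Since $S$ is small, few $v$ have $W(v,S)/a_v$ bounded away from $0$. Using $h_d(x)\ge dx-\binom d2x^2$, the left side is at least $d\sum_{u\in S}B_u$ minus a quadratic correction. For singletons, $B_u\ge (a_u-O(\text{exceptional}))/((1+\epsilon)\lambda\ln n)\ge t(1-2\epsilon)$, so $dB_u>1$. For general small $S$, the sparse-graph fact that small sets span few internal edges (at most $O(|S|)$ edges, versus $\Theta(|S|\ln n)$ outgoing) shows the quadratic loss and overlap of neighborhoods are $o(|S|)$. Hence the left side exceeds $|S|(dt(1-o(1)))>|S|$.

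\emph{Large and medium $S$.} Define $T=S^c$. The cut condition is equivalent to $\sum_v[1-h_d(W(v,S)/a_v)]<|T|$, i.e.\ $\sum_v (W(v,T)/a_v)^d<|T|$. When $T$ is small, using $d\ge2$ this sum is dominated by $\sum_v (W(v,T)/a_v)^2$. That quantity is $o(|T|)$ because each $v$ has few neighbors in a small $T$ relative to $a_v\gtrsim\ln n$. When $S$ and $T$ are both of size at least $\epsilon n/\ln n$, I apply degree concentration of $W(v,S)$ around $|S|\lambda\ln n/n$, uniformly via Chernoff and a union bound over sets (expander mixing / discrepancy for $G(n,\lambda\ln n/n)$). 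This gives left side $\approx n h_d(|S|/n)-o(n/\ln n)$, and $n h_d(x)>nx$ with a margin of order $\min(x,1-x)n$ by strict concavity.

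I expect the main obstacle to be the intermediate regime, where sets are just above the small threshold. There the margin $h_d(x)-x\approx (d-1)x$ must beat discrepancy errors of order $\sqrt{|S|n\ln n/\ln n}$, and the union bound over $\binom{n}{|S|}$ sets must be matched against Chernoff tails. I would first try to cover it with the sparse edge-count lemma (sets of size $\le \epsilon n$ have $e(S,S^c)\ge(1-\epsilon)\sum_{u\in S}a_u$) combined with the linear lower bound $h_d(x)\ge x\cdot h_d(y)/y$ for $x\le y$.
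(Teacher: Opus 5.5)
Your instability argument is sound and follows the paper's route: a second-moment count of vertices of degree at most $t\lambda\ln n$ with $t<1/d$, plus a first-moment bound showing that only a vanishing fraction of them have neighbors of atypical degree.

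The gap is in the stability direction, and it is the one you flag. Your small-set argument is claimed only for $|S|\leq\epsilon n/\ln n$. Your medium-set argument then needs the left side of \eqref{eq:er-closed-neighborhood-cut-SBM} to be within $o(n/\ln n)$ of $nh_d(|S|/n)$ all the way down to $|S|\approx\epsilon n/\ln n$, and nothing you invoke proves this. The discrepancy (spectral) bound gives an error of order $\sqrt{n|S|/\ln n}$. At $|S|=\epsilon n/\ln n$ this is of order $\sqrt{\epsilon}\,n/\ln n$, which exceeds the margin $(d-1)\epsilon n/\ln n$ when $\epsilon$ is small. At that scale each $\weight(v,S)$ is $O(1)$ and not concentrated, so per-vertex Chernoff bounds plus a union bound over sets do not help either. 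Hence the cuts with $\epsilon n/\ln n\lesssim|S|\lesssim Kn/\ln n$ remain unverified.

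Your suggested patch also targets the wrong quantity. The loss in $h_d(x)\geq dx-\binom d2x^2$, or in $h_d(x)\geq x\,h_d(y)/y$, comes from vertices $v$, mostly outside $S$, that have a non-negligible fraction of their neighbors in $S$. A lower bound on $e(S,S^c)$ gives no control over them.

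The missing idea is that no intermediate regime is needed: your small-set argument works for all $|S|\leq\theta_0 n$ with a fixed constant $\theta_0>0$. The linear term is exact, $\sum_u\weight(u,S)/a_u=\sum_{v\in S}B_v\geq(1/d+\eta)|S|$. The quadratic term is bounded uniformly over all $S$ by one spectral estimate. Write $D$ for the diagonal matrix of closed degrees and set $\rho:=\norm{D^{-1}\weight-\tfrac1n\ones\ones^{\mathsf T}}_2$; then
\[
    \sum_{u=1}^n\Bigl(\frac{\weight(u,S)}{a_u}\Bigr)^2=\norm{D^{-1}\weight\ones_S}_2^2\leq2\Bigl(\rho^2+\frac{|S|}{n}\Bigr)|S|.
\]
Here $\rho=O_{\Prob}((\ln n)^{-1/2})$ follows from $\norm{A^\circ-\Expect A^\circ}_2=O_{\Prob}(\sqrt{\ln n})$, where $A^\circ$ is the adjacency matrix of $G$, together with $\min_ua_u\geq t\lambda\ln n$ (\Cref{lem:sbm-mixing}). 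For small $\theta_0$ the quadratic loss is at most a small multiple of $|S|$, which the margin $d\eta|S|$ absorbs.

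The same bound, applied to $T=S^c$ through $\sum_u(\weight(u,T)/a_u)^d\leq\norm{D^{-1}\weight\ones_T}_2^2$, covers $|S|\geq(1-\theta_0)n$. The comparison with $nh_d(|S|/n)$ is therefore only needed for $\theta_0 n\leq|S|\leq(1-\theta_0)n$. There the Lipschitz error $d\rho n=o(n)$ is negligible against the $\Theta(n)$ margin from strict concavity. This is exactly the structure of \Cref{lem:deterministic-sbm-cuts}.

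If you prefer to stay combinatorial, the right replacement for your edge-count lemma is a bipartite statement: with high probability, for every $S$ with $|S|\leq\theta_0 n$, at most $\eta'|S|$ vertices $v$ satisfy $\weight(v,S)>y\,a_v$. This follows from a union bound over pairs $(S,U)$. The Chernoff exponent for $e(U,S)$ carries a factor $|U|\ln n$, which dominates the log-count of pairs $(S,U)$ once $|U|\geq\eta'|S|$ and $\theta_0$ is small enough.
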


\Cref{thm:special-equal-degree-coeffi}, whose proof is deferred to \Cref{sec:proof-cor-special-equal-degree-coeffi}, follows by specializing \Cref{thm:instability-sbm} and \Cref{thm:stability-sbm} to a single community.
\Cref{fig:phase-transition} illustrates the stability regime for Erd\H{o}s--R\'enyi graphs.

\begin{figure}[!t]
\centering
\begin{tikzpicture}[
    declare function={lamstar(\d)=1/(1-(1+ln(\d))/\d);}
]
\begin{axis}[
    width=0.60\textwidth,
    height=0.50\textwidth,
    xmin=1, xmax=7,
    ymin=2, ymax=12,
    axis lines=left,
    axis line style={black!75, semithick},
    xlabel={$\lambda=np/\ln n$},
    ylabel={$d$},
    ylabel style={rotate=-90},
    xtick={1,2,3,4,5,6,7},
    ytick={2,4,6,8,10,12},
    tick align=outside,
    tick style={black!65},
    tick label style={font=\small},
    label style={font=\normalsize},
    enlargelimits=false,
    clip=false,
    axis on top
]

\path[fill=red!20, draw=none]
    (axis cs:1,2) rectangle (axis cs:7,12);

\path[fill=green!20, draw=none]
    (axis cs:{lamstar(12)},12)
    -- (axis cs:7,12)
    -- (axis cs:7,2)
    -- (axis cs:{lamstar(2)},2)
    -- plot[domain=2:12, samples=300, variable=\t]
        (axis cs:{lamstar(\t)},\t)
    -- cycle;

\addplot[
    black!85, thick, dashed,
    domain=2:12, samples=300, variable=\t
] ({lamstar(\t)},\t);

\addplot[
    only marks, mark=*, mark size=2pt,
    mark options={
        solid, draw=black!85, fill=white, line width=0.8pt
    },
    samples at={2,3,4,5,6,7,8,9,10,11,12},
    variable=\t
] ({lamstar(\t)},\t);

\node[align=center, font=\footnotesize] at (axis cs:2.0,2.75)
    {\textbf{Unstable w.h.p.}\\[3pt]
     $\lambda<\lambda_d^*$};

\node[align=center, font=\footnotesize] at (axis cs:4.9,9.3)
    {\textbf{Stable w.h.p.}\\[3pt]
     $\lambda>\lambda_d^*$};

\node[anchor=west, font=\small, inner sep=2pt]
    at (axis cs:3.3,6.1)
    {$\displaystyle
      \lambda_d^*=\left(1-\frac{1+\ln d}{d}\right)^{-1}$};

\draw[->, black!65, thin]
    (axis cs:3.22,6.1)
    -- (axis cs:{lamstar(5.4)},5.4);

\end{axis}
\end{tikzpicture}

\caption{Stability regimes for $G\sim G(n,p)$, with $\lambda=np/\ln n$.
The two regions indicate stability and instability with high probability.
Open circles mark the thresholds for integer $d\ge2$; the dashed curve interpolates between them.
The critical equality is not covered.
As $d\to\infty$, $\lambda_d^*\to1$.}
\label{fig:phase-transition}
\end{figure}

\subsubsection{Stochastic block models}
\label{sec:sbm-individual}

For stochastic block models, the approximation in \eqref{eqn:def-column-exposure-er} needs to be properly generalized since neighbors from different communities admit different expected degrees.
To account for these heterogeneous contributions from different neighbors to the exposure, define the expected closed-degree coefficient of community $i$ by
\[
    \lambda_i:=\sum_{j=1}^m\alpha_jc_{ij}>1.
\]
For intuition, assume $L_n \to \infty$ and $L_n=o(n)$.
Each $v \in H_i$ satisfies $\Expect a_v \approx \lambda_i L_n$, and hence the approximation in \eqref{eqn:def-column-exposure-er} becomes
\[
 B_u\approx\frac1{L_n}\sum_{j=1}^m\frac{W(u,H_j)}{\lambda_j}.
\]
For $u\in H_i$, a large-deviation heuristic for its community neighbor counts suggests
\begin{align*}
    \Prob(B_u \leq t) \approx \exp(- L_n I_i(t)),
    \qquad I_i(t):=\min_{\substack{\bdx\in\reals^m_{\geq0}\\\sum_jx_j/\lambda_j\leq t}}
    \sum_{j=1}^m\left(x_j\ln\frac{x_j}{\alpha_jc_{ij}}-x_j+\alpha_jc_{ij}\right).
\end{align*}
If, for some fixed $\epsilon>0$, there exists a community $i$ such that
\[
 \limsup_{n\to\infty}\frac{L_nI_i(1/d)}{\ln n} \leq 1 - \epsilon,
\]
then $\Prob(B_u \leq 1 / d) \approx \exp(-(1 - \epsilon)\ln n) = n^{-(1 - \epsilon)}$ for each $u \in H_i$.
We then show via a second-moment argument that, with high probability, there exists $u \in H_i$ with $d B_u < 1$.
Recall that $dB_u < 1$ witnesses the failure for the singleton $\{u\}$ to satisfy \eqref{eq:er-closed-neighborhood-cut-SBM}, concluding instability.
We formalize the above arguments for the individual obstruction in the following proposition, whose proof is given in \Cref{sec:proof-thm-instability-sbm}.

\begin{proposition}\label{thm:instability-sbm}
If there exists $i \in [m]$ such that
\[
 \limsup_{n\to\infty}\frac{L_nI_i(1/d)}{\ln n} <1,
\]
then the system is unstable with high probability under every admissible provider-selection rule.
\end{proposition}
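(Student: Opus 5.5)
Our plan is to exhibit, with high probability, a single agent $u\in H_i$ whose singleton cut violates \eqref{eq:er-closed-neighborhood-cut-SBM} strictly. \Cref{thm:min-token-backpressure} then rules out every admissible stationary policy. For a singleton $\{u\}$ we have $W(v,u)/a_v\in[0,1]$, and $h_d(x)\le dx$. Hence
\[
\frac1n\sum_v h_d\!\left(\frac{W(v,u)}{a_v}\right)\le \frac{d}{n}B_u .
\]
It therefore suffices to show that, with high probability, some $u\in H_i$ has $dB_u<1$. Because we need strict inequality, we target $B_u\le (1-\delta)/d$ for a small fixed $\delta>0$. This is possible because $I_i$ is continuous and decreasing in $t$, so $\limsup L_nI_i((1-\delta)/d)/\ln n<1$ still holds for $\delta$ small.

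First we reduce $B_u$ to neighbor counts. Write $B_u=\sum_{v\in\bar N(u)}1/a_v$. The term $1/a_u$ and the effect of conditioning on the edges at $u$ are lower order, contributing $O(1/L_n)$. For each neighbor $v\in H_j$, we need $a_v\ge(1-\eta)\lambda_jL_n$.

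\begin{itemize}
\item When $L_n\to\infty$, Chernoff bounds give this for all but a negligible set of vertices. We handle the exceptional low-degree neighbors by noting that the expected number of neighbors of $u$ with $a_v\le(1-\eta)\lambda_jL_n$ is tiny. Conditioning on $N(u)$, the degrees $a_v-1$ of the neighbors are (nearly) independent binomials over the remaining vertices, and these are independent of the event concerning $u$.
\item When $L_n$ is bounded, the threshold condition $L_nI_i/\ln n<1$ holds trivially, and a direct argument works: with probability $n^{-o(1)}$, $u$ has few neighbors and each of them has large degree. We expect to treat this regime separately, or to uniformly use the bound $B_u\le \sum_j |N(u,H_j)|/\min_{v\in N(u,H_j)}a_v$.
\end{itemize}

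Second, the first-moment lower bound. Take the minimizer $\bdx^*$ of $I_i((1-\delta)/d)$, perturbed slightly so that $\sum_j x_j/((1-\eta)\lambda_j)<(1-\delta/2)/d$. By the local (Poisson/binomial) large deviation lower bound,
\[
\Pr\bigl(|N(u,H_j)|\le x_j^*L_n\ \forall j\bigr)\ge \exp\bigl(-L_nI_i-o(L_n)-O(\ln L_n)\bigr)\ge n^{-(1-\epsilon)}.
\]
The counts $|N(u,H_j)|\sim\mathrm{Bin}(n_j,Q_{ij})$ are independent across $j$. Combined with the neighbor-degree control, this gives $\Pr(dB_u<1)\ge n^{-1+\epsilon/2}$. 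Consequently the expected number $Z$ of such $u\in H_i$ is at least $\alpha_i n^{\epsilon/2}\to\infty$.

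Third, the second moment. For $u\ne u'$ the events are almost independent: they share only the edge $uu'$ and overlapping neighbor degrees. We would show $\Pr(E_u\cap E_{u'})\le(1+o(1))\Pr(E_u)\Pr(E_{u'})$ by conditioning on the status of edge $uu'$, which shifts counts by at most $1$ and is absorbed by slack in $\delta$. Common neighbors have expected number $O(L_n^2/n)$, which is negligible. Chebyshev then gives $Z>0$ with high probability.

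The main obstacle is making the neighbor-degree approximation uniform and rigorous in the second-moment step. We would simplify it by defining $E_u$ via a monotone (decreasing) event: few neighbors, and every neighbor having closed degree at least $(1-\eta)\lambda_jL_n$ computed excluding edges to $u,u'$. We would then verify that this event still implies $dB_u<1$.
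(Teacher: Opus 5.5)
Your plan for the case $I_i(1/d)>0$ follows broadly the same lines as the paper, but it breaks when $I_i(1/d)=0$. By \Cref{lem:char-Iit}, this happens exactly when $\xi_i=\sum_j\alpha_jc_{ij}/\lambda_j\leq 1/d$. That forces $m\geq2$, since $\xi_1=1$ when $m=1$. In this case the hypothesis of \Cref{thm:instability-sbm} holds for every admissible $L_n$, including $L_n/\ln n\to\infty$ and $L_n=\Theta(n)$.

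Your continuity step is false at the boundary $\xi_i=1/d$. For every $\delta>0$ we have $(1-\delta)/d<\xi_i$, hence $I_i((1-\delta)/d)>0$, and so $L_nI_i((1-\delta)/d)/\ln n\to\infty$ whenever $L_n/\ln n\to\infty$. In that regime, \Cref{lem:weighted-degree-tail} (union bound) together with \eqref{eqn:sbm-one-sided-exposure} shows that, with high probability, no $u\in H_i$ has $B_u\leq(1-\delta)/d$. Your expected count therefore tends to zero, and the fixed-margin singleton witness in $H_i$ does not exist. Even when $\xi_i<1/d$, your local Poisson estimates fail in the dense regime, as does the claim that common neighbors are negligible at $O(L_n^2/n)$; there a typical vertex already works, but you do not address it.

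The missing idea is that this case is a community obstruction, not an individual one. We have $h_d(z)<dz$ for $z\in(0,1]$ when $d\geq2$, and every $R(k,i)>0$. By the symmetry $c_{ki}=c_{ik}$, $\sum_k\alpha_kR(k,i)=\alpha_i\xi_i$. Hence the cut $\bdx=\boldsymbol e_i$ satisfies
\[
\sum_{k=1}^m\alpha_k h_d(R(k,i))<d\alpha_i\xi_i\leq\alpha_i,
\]
which strictly violates \eqref{eqn:bc-general-sbm}, so \Cref{lem:block-cut-instability} gives instability at every scale. This is exactly how the paper disposes of the case $\min_jI_j(1/d)=0$ before running the singleton argument.

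The case you do treat also needs some smaller repairs.
\begin{itemize}
\item The event $\{|N(u,H_j)|\leq x_j^*L_n\ \forall j\}$ gives no lower bound on $a_u$, so $1/a_u=O(1/L_n)$ is not guaranteed. Fix exact counts $|N(u,H_j)|=\lfloor L_nx_j\rfloor$ (or a window), as the paper does. Similarly, in the bounded regime $B_u\geq1/a_u$, so $u$ needs at least $d$ neighbors, not merely few.
\item Your event ``few neighbors and all neighbors of high degree'' is not monotone, because high neighbor degree is an increasing event. The planned simplification of the second moment therefore does not go through as stated. It is easier to run the second moment only on the neighbor-profile event, which depends only on the edges at $u$. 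Then show, by a conditional first-moment bound, that only a vanishing fraction of profile vertices have $B_u$ noticeably above $T_u$. This is what \Cref{lem:exposure-comparisons,lem:low-exposure-witness} do.
\item $L_n$ need be neither bounded nor divergent, so you must combine your two regimes through a subsequence argument.
\end{itemize}
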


\subsection{Combining the two obstructions}
\label{sec:sbm-main}

The following theorem shows that the absence of both obstructions with strict margins is sufficient for stability with high probability under the minimum-token rule.

\begin{theorem}\label{thm:stability-sbm}
If \eqref{eqn:bc-general-sbm} holds and
\[
    \liminf_{n\to\infty}\frac{L_n I_i(1/d)}{\ln n}>1
\]
for every $i \in [m]$, then the minimum-token rule stabilizes the system with high probability.
\end{theorem}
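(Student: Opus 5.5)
By \Cref{thm:min-token-backpressure}, it suffices to show that with high probability the strict inequality \eqref{eq:er-closed-neighborhood-cut-SBM} holds for every cut $\emptyset\neq S\subsetneq[n]$. Strictness also makes the sampling-separating clause irrelevant.

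I will write $\Phi(S):=\sum_u h_d(W(u,S)/a_u)-|S|$ and show $\Phi(S)>0$ for all $S$. By the complement symmetry $h_d(x)$ vs.\ $h_d(1-x)$ this is not literally symmetric, so both small and large $S$ must be handled. I split cuts into three regimes according to $|S|$ and to how $S$ sits relative to the blocks.

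\textbf{Small cuts, $|S|\le \delta n$.} I will use $h_d(x)\ge dx-\binom d2x^2$. This gives
\[
\Phi(S)\ge d\sum_{v\in S}B_v-\binom d2\sum_u \frac{W(u,S)^2}{a_u^2}-|S|.
\]
The individual condition $\liminf L_nI_i(1/d)/\ln n>1$, together with a union bound over $u\in H_i$ using the large-deviation upper bound $\Prob(B_u\le 1/d+\eta)\le e^{-L_nI_i(1/d+\eta)}$, gives $\min_u dB_u\ge 1+\eta'$ w.h.p. Here $I_i$ is continuous and I take a small margin $\eta$.

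For the quadratic error I need $\sum_u W(u,S)^2/a_u^2\le \epsilon|S|$ for small $S$. This follows from $a_u\gtrsim L_n$ for all $u$ (a consequence of the same min-degree bound), and from edge-density estimates: a set $S$ of size $s\le\delta n$ spans $o(sL_n)$ internal-ish multiplicity, so $\sum_u W(u,S)^2\le \sum_u W(u,S)\max_u W(u,S)$. I will handle this by splitting vertices $u$ according to whether $W(u,S)\le \epsilon a_u$. When $W(u,S)\le\epsilon a_u$, the quadratic term is at most $\epsilon$ times the linear term. When $W(u,S)>\epsilon a_u$, such $u$ are few by an expansion lemma for random graphs of degree $\gtrsim L_n$: w.h.p.\ the number of $u$ with $\epsilon a_u$ neighbors in $S$ is at most $\epsilon'|S|$ for all $|S|\le\delta n$, via a union bound over sets. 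For these few $u$ I just use $h_d\le1$.

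If $L_n$ is bounded, the individual condition fails, since $L_nI_i\le CL_n\ll\ln n$. So $L_n/\ln n$ is bounded below, which makes these union bounds feasible.

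\textbf{Large cuts, $|S|\ge(1-\delta)n$.} I will pass to the complement $T=[n]\setminus S$ and use $h_d(1-y)=1-y^d$, writing $y_u=W(u,T)/a_u$. The condition becomes
\[
\sum_u y_u^d<|T|.
\]
Since $d\ge2$, $y_u^d\le y_u\max y$. With the same expansion bounds, $\sum_u y_u\approx \sum_{v\in T} B_v$, and the $u$ with large $y_u$ are few. The main point is that $|T|$ small forces most $y_u$ to be tiny, so this regime should be easy. I still need care with a single vertex $u$ whose closed neighborhood lies mostly in $T$, e.g.\ $T=\bar N(u)$. Then $y_u^d\le1$, but $|T|\ge a_u\gg1$.

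\textbf{Intermediate cuts, $\delta n\le|S|\le(1-\delta)n$.} I expect this regime to be the main obstacle. Here I will use concentration: w.h.p.\ $a_u=(1+o(1))\lambda_{i}L_n$ for all but $o(n)$ vertices. The regime $L_n=\Theta(\ln n)$ needs care, because degrees only concentrate for most vertices, but the exceptional ones contribute $o(n)$.

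Next I will use an expander-mixing/discrepancy bound for SBM with degrees $\gtrsim\ln n$: uniformly over $S$,
\[
\sum_u\Bigl|\tfrac{W(u,S)}{a_u}-(R\bds)_{\text{blk}(u)}\Bigr|=o(n)+O(n/\sqrt{L_n}),
\]
where $s_j=|S\cap H_j|/n_j\in[0,1]$. This follows from spectral concentration of the adjacency matrix, with $\|A-\Expect A\|=O(\sqrt{L_n})$ after regularization, together with Cauchy--Schwarz. Since $h_d$ is Lipschitz, this yields
\[
\Phi(S)/n\ge F(\bds)-o(1),\qquad F(\bds):=\sum_i\alpha_i\bigl[h_d((R\bds)_i)-s_i\bigr].
\]
So I need $F>c(\delta)>0$ on the fractional set $\{\bds\in[0,1]^m:\delta\le\sum\alpha_is_i\le1-\delta\}$.

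\eqref{eqn:bc-general-sbm} only gives $F>0$ at the vertices of the cube. Because $h_d$ is concave, $F$ is concave in $\bds$, since $R\bds$ is linear and $-\sum_i\alpha_i s_i$ is linear. A concave function's minimum over the cube is attained at the vertices. Hence $F\ge\min_{\text{vertices}}F$, and $F(\zeros)=F(\ones)=0$.

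To get a positive bound away from $\zeros$ and $\ones$, I will show $F$ grows linearly near $\zeros$: its gradient at $\zeros$ is $d\alpha^{\mathsf T}R-\alpha^{\mathsf T}$. I will check that $\alpha^{\mathsf T}R$ has coordinates $\alpha_j\sum_i\alpha_ic_{ij}/\lambda_i$, which is comparable to $\alpha_j$ and needs a margin when $d\ge2$. This is exactly where the small-cut estimates must overlap. Near $\ones$ I will use the analogous argument. Combining concavity on the simplex-like region with these boundary slopes gives $F\ge c(\delta)$, and letting $\delta$ be chosen from the small-cut regime closes the argument.
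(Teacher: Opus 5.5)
Your three-regime skeleton matches the paper's. For small and large cuts you control the quadratic term with a combinatorial expansion lemma: few $u$ have $\weight(u,S)>\epsilon a_u$, shown by a union bound over pairs $(S,U)$. This is a legitimate alternative to the paper's route. The paper instead gets $\norm{D^{-1}\weight\ones_S}_2^2\le 2(\rho^2+w_{\max}^2|S|/n)|S|$ directly from the spectral estimate $\rho=\norm{D^{-1}\weight-P_R}_2\to0$ that it already needs for intermediate cuts. The spectral route is shorter. Yours is more elementary but needs $\delta\ll\epsilon$ and an extra union bound, which does go through because $L_n\to\infty$. For the few bad $u$ you need a lower bound, so bound $x^2\le1$ in the quadratic term rather than invoking $h_d\le1$. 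Two steps, however, are asserted rather than proved.

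\textbf{The exposure margin.} The rate function $I_i$ is the large-deviation rate of the weighted degree $T_u=L_n^{-1}\sum_j|N(u,H_j)|/\lambda_j$. That quantity has deterministic weights and independent binomial summands. By contrast, $B_u=\sum_{v\in\bar N(u)}1/a_v$ depends on the random degrees of all of $u$'s neighbors and is not a sum of independent terms. So exponential Markov does not give $\Prob(B_u\le 1/d+\eta)\le e^{-L_nI_i(1/d+\eta)}$; that is only the heuristic from the paper's motivation section.

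The paper proves the tail bound only for $T_u$ (\Cref{lem:weighted-degree-tail}). It then needs a separate uniform comparison, \eqref{eqn:sbm-one-sided-exposure}. The tangent bound $1/y\ge 1/z-(y-z)/z^2$ gives $B_u\ge T_u-Z_u$, and Bernstein, conditionally on $N(u)$, gives $\Prob(Z_u>\epsilon\mid N(u))\le e^{-c_\epsilon L_n^2}$. This survives the union over all $n$ vertices precisely because $L_n\ge b\ln n$. The paper then picks $t\in(1/d,\min_i\xi_i)$ with $b\min_iI_i(t)>1$, so that $\min_uT_u>t$ and hence $\min_uB_u\ge 1/d+\eta$. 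Without this comparison step, your key inequality $\min_u dB_u\ge 1+\eta'$ is unjustified.

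\textbf{Positivity of $F$ on the middle region.} Concavity and the vertex values give only $F\ge0$ on the cube. What you need is $F>0$ at every point of $\Gamma=\{\bdx\in[0,1]^m:\delta\le\sum_i\alpha_ix_i\le1-\delta\}$, after which compactness gives $c(\delta)=\min_\Gamma F>0$. The gradients at $\zeros$ and $\ones$ only control neighborhoods of those two points, which $\Gamma$ excludes, so they are not the relevant mechanism. Likewise, no overlap with the small-cut estimates is needed: fix $\delta$ from the small-cut regime, then take $c(\delta)$.

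The missing argument is the paper's \Cref{lem:equi-con-bc-general-sbm}. Since $d\ge2$ and every entry of $R$ is positive, $F$ is strictly concave along each coordinate line. Hence for any non-vertex $\bdx$ with some $x_j\in(0,1)$, $F(\bdx)>(1-x_j)F(\bdx^{(0)})+x_jF(\bdx^{(1)})\ge0$. With these two steps supplied, your proof goes through.
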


The proof of \Cref{thm:stability-sbm} is given in \Cref{sec:proof-thm-stability-sbm}.
Together with \Cref{lem:block-cut-instability,thm:instability-sbm}, these results give the stability criterion for all common degree scales.
In particular, \Cref{thm:instability-sbm} gives instability with high probability at sublogarithmic scales.
At superlogarithmic scales, \Cref{thm:stability-sbm} gives stability with high probability whenever \eqref{eqn:bc-general-sbm} holds.
At the logarithmic scale $L_n=\ln n$, $\min_{i\in[m]}I_i(1/d)=1$ serves as the critical threshold for individual obstruction.

\section{Proof of \Cref{thm:min-token-backpressure}}
\label{sec:proof-thm-min-token-backpressure}

For each $S\subseteq[n]$, define its candidate-hitting probability by
\begin{align*}
        f(S):=\Pr(A_t\cap S\neq\emptyset)
    =\sum_{i=1}^n p_i h_d \left(\frac{\weight(i,S)}{\weight(i,[n])}\right) \,, 
\end{align*}
and define the request probability by $p(S):=\sum_{i\in S}p_i$.

\paragraph{Stability under the minimum-token rule.}
Assume that every cut satisfies \eqref{eqn:cut-con-stability} and that each cut attaining equality is sampling-separating.
We show that the system is stable under the minimum-token rule for any admissible tie-breaking rule.

Let $\mathcal C$ be the connected components of the undirected support graph with an edge $\{i,j\}$ for distinct $i,j$ whenever $\weight(i,j)>0$ or $\weight(j,i)>0$.
Transfers stay within a component, so the initial condition implies
\[
    \sum_{i\in C}s_{t,i}=0\quad\text{for every }C\in\mathcal C,\ t\geq0.
\]
The Markov chain $(\bds_t)_{t \geq 0}$ therefore remains in the countable invariant state space
\[
    \mathcal X:=\left\{\bds\in\integers^n:       \sum_{i\in C}s_i=0\ \text{for every }C\in\mathcal C\right\}.
\]
If every support component is a singleton, the chain stays at $\zeros$ and is positive recurrent; henceforth, assume that some component contains at least two agents.

We first establish irreducibility on $\mathcal X$.
For any $C\in\mathcal C$ with $|C|\geq2$, each nonempty proper subset $S\subsetneq C$ is not sampling-separating, so the assumed cut criterion \eqref{eqn:cut-con-stability} gives a strict inequality.
Consider the directed support graph with an edge $i\to j$ when $\weight(i,j)>0$, and we claim that the subgraph induced by $C$ is strongly connected.
If the subgraph induced by $C$ were not strongly connected, let $S\subsetneq C$ be a source strongly connected component.
Then $A_t\cap S\neq\emptyset$ would imply $r_t\in S$, contradicting the strict cut inequality for $S$.
For every supported edge $i\to j$ with $i\neq j$, the event $(r_{t+1}=i) \land (A_{t+1}=\{j,\ldots,j\})$ has probability $p_i\bigl(\weight(i,j)/\weight(i,[n])\bigr)^d>0$ and forces the transition $\bds\to\bds-e_i+e_j$, independently of the state and tie-breaking rule.
Transfers along a directed path move one token between its endpoints, leaving the intermediate balances unchanged.
By concatenating such paths, any vector in a component's zero-sum integer lattice can be reached from any other.
Applying these transfers successively within the components proves irreducibility on $\mathcal X$.

Sampling-separating sets are precisely unions of components in $\mathcal C$.
Define the minimum slack over all non-separating cuts by
\begin{align*}
    \epsilon:=\min_{\substack{\emptyset\neq S\subsetneq[n]\\
    S\text{ is not sampling-separating}}}
    \bigl(f(S)-p(S)\bigr)>0.
\end{align*}
Consider the quadratic Lyapunov function
\begin{align*}
    \calL(\bds_t)
    := \frac{1}{2} \norm{\bds_t}_2^2,
\end{align*}
and we next establish a drift bound for $\calL(\bds_t)$.

For every transition,
\begin{align}
    \calL(\bds_{t + 1}) - \calL(\bds_t)
    &= \frac{1}{2} \langle \bds_{t + 1} + \bds_t, \bds_{t + 1} - \bds_t \rangle \notag \\
    &= \langle \bds_t, \bds_{t + 1} - \bds_t \rangle + \frac{1}{2} \norm{\bds_{t + 1} - \bds_t}_2^2
    \leq \langle \bds_t, \bds_{t + 1} - \bds_t \rangle + 1.
    \label{eqn:drift-quad-backpressure-det}
\end{align}
For fixed requester $r_{t+1}$ and candidate multiset $A_{t+1}$, the linear term in \eqref{eqn:drift-quad-backpressure-det} is
\begin{align}
    \langle \bds_t, \bds_{t + 1} - \bds_t \rangle
    = s_{t,a_{t+1}}-s_{t,r_{t+1}},
    \label{eqn:linear-term-drift}
\end{align}
and the minimum-token rule minimizes this term for every realization by choosing a candidate with the fewest tokens.
The strict cut inequalities then provide a strict negative drift needed for stability, as we formalize below.

Fix the current state $\bds_t=\bds\in\mathcal X$.
Choose a permutation $\pi$ of $[n]$ such that $s_{\pi(1)} \leq \ldots \leq s_{\pi(n)}$, and let $S_j := \{\pi(1), \ldots, \pi(j)\}$ consist of the first $j$ agents in $\pi$.
Whenever $s_{\pi(j)}<s_{\pi(j + 1)}$, we claim that the set $S_j$ is not sampling-separating.
To see this, if $S_j$ were sampling-separating, both $S_j$ and $[n]\setminus S_j$ would be unions of components in $\mathcal C$.
Since $\bds\in\mathcal X$ implies $\sum_{i\in C}s_i=0$ for each $C\in\mathcal C$, summing over these unions gives $\sum_{k=1}^j s_{\pi(k)}=0$ and $\sum_{k=j+1}^n s_{\pi(k)}=0$.
Consequently,
\[
    0=\frac1j\sum_{k=1}^j s_{\pi(k)}
    \leq s_{\pi(j)}<s_{\pi(j + 1)}
    \leq\frac1{n-j}\sum_{k=j+1}^n s_{\pi(k)}=0,
\]
leading to a contradiction.

For each $j \in [n - 1]$ such that $s_{\pi(j)} < s_{\pi(j + 1)}$, notice that the minimum-token rule selects a provider in $S_j$ precisely when some available provider belongs to $S_j$, and hence
\begin{align}
    \Pr[a_{t+1}\in S_j\mid\bds_t=\bds]&=\Pr[A_{t+1}\cap S_j\neq\emptyset].
    \label{eqn:prob-at+1-in-Sj}
\end{align}
For every agent $i\in[n]$, we have
\begin{align}
    s_i=s_{\pi(n)}-\sum_{j=1}^{n-1}(s_{\pi(j + 1)}-s_{\pi(j)})\indc{i\in S_j}.
    \label{eqn:si-sum-by-part}
\end{align}
Taking conditional expectations in \eqref{eqn:drift-quad-backpressure-det} and applying \eqref{eqn:linear-term-drift} yields
\begin{align*}
    \Expect[\calL(\bds_{t+1})-\calL(\bds_t)\mid\bds_t=\bds]
    &\leq 1+\Expect[s_{a_{t+1}}-s_{r_{t+1}}\mid\bds_t=\bds]\\
    &=1-\sum_{j=1}^{n-1}(s_{\pi(j + 1)}-s_{\pi(j)})
       \left(\Pr[a_{t+1}\in S_j\mid\bds_t=\bds]-\sum_{i\in S_j}p_i\right)\\
    &=1-\sum_{j=1}^{n-1}
       (s_{\pi(j + 1)}-s_{\pi(j)})\left(\Pr[A_{t+1}\cap S_j\neq\emptyset]-\sum_{i\in S_j}p_i\right)\\
    &\leq 1-\epsilon\sum_{j=1}^{n-1}(s_{\pi(j + 1)}-s_{\pi(j)}) \\
    &=1-\epsilon(s_{\pi(n)}-s_{\pi(1)}),
\end{align*}
where the first equality holds by applying \eqref{eqn:si-sum-by-part} to the requester and the provider, the second equality holds by \eqref{eqn:prob-at+1-in-Sj}, and the last inequality holds by the definition of $\epsilon$.
Since $\sum_{i=1}^n s_i=0$, we have $s_{\pi(1)}\leq0\leq s_{\pi(n)}$ and therefore
\[
    2\calL(\bds_t)=\sum_{i=1}^n s_{\pi(i)}^2\leq n(s_{\pi(n)}-s_{\pi(1)})^2.
\]
It follows that
\[
\Expect[\calL(\bds_{t+1})-\calL(\bds_t)\mid\bds_t]
\leq1-\epsilon\sqrt{\frac{2\calL(\bds_t)}{n}}.
\]
Taking $K:=2n/\epsilon^2$ gives drift at most $-1$ whenever $\calL(\bds_t)>K$.
The set $\{\bds\in\mathcal X:\calL(\bds)\leq K\}$ is finite because its coordinates are bounded integers.
From each state there are finitely many possible next states, each with finite quadratic value, so the required one-step expectation is finite.
The chain is time-homogeneous and irreducible on $\mathcal X$ as proved above.
Thus the Foster--Lyapunov criterion \cite[Theorem 3.1]{tassiulas1992stability} (see also~\cite[Chapter~11]{meyn2009markov} and~\cite[Chapter~8]{robert2013stochastic}) applies with drift bound $-1$, proving stability under the minimum-token rule with arbitrary permitted tie-breaking.

\paragraph{Instability when the cut criterion fails.}
Suppose there exists $\emptyset \neq S \subsetneq [n]$ such that either \eqref{eqn:cut-con-stability} does not hold, or \eqref{eqn:cut-con-stability} holds with equality and $S$ is not sampling-separating.

Consider any admissible stationary provider-selection rule, and we show that $\Expect_{\zeros}[\tau_{\zeros}^{+}]=\infty$ under this rule.
Let $\mathcal F_t$ be the sigma-field generated by the fixed matrix $\weight$ and the entire history through time $t$, including the selection randomness.
Let $X_t:=\sum_{i\in S}s_{t,i}$.
Conditional on the entire history, fresh sampling gives
\begin{align*}
    \Expect[X_{t+1}-X_t\mid\mathcal F_t]
    \leq f(S)-p(S)\leq0.
\end{align*}
Thus $(X_t)$ is a supermartingale with respect to $(\mathcal F_t)$, with $X_0=0$ and increments bounded in absolute value by one.

Let $\tau:=\tau_{\zeros}^+$, and suppose for contradiction that $\Expect_{\zeros}[\tau]<\infty$.
For each integer $t\geq1$, the stopping time $t\wedge\tau=\min\{t,\tau\}$ satisfies $1\leq t\wedge\tau\leq t$.
Since $(X_t)$ is a supermartingale, the optional-sampling theorem for bounded stopping times gives
\begin{align}
    \Expect_{\zeros}[X_{t\wedge\tau}\mid\mathcal F_1]
    \leq X_1.
    \label{eqn:conditional-ineqn}
\end{align}
Consider the event $E_-:=\{r_1\in S,\ A_1\cap S=\emptyset\}$.

\begin{claim}\label{cla:prob-eminus}
    We have $\Pr(E_-) > 0$.
\end{claim}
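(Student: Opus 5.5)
The plan is to compute $\Pr(E_-)$ explicitly and show it can vanish only if $S$ is sampling-separating, which the hypotheses exclude. Since the requester $r_1$ is drawn from $\bdp$ and, given $r_1=i$, the $d$ candidates are drawn i.i.d.\ with probabilities $\weight(i,j)/\weight(i,[n])$, we have
\begin{align*}
    \Pr(E_-)=\sum_{i\in S}p_i\left(1-\frac{\weight(i,S)}{\weight(i,[n])}\right)^{d}.
\end{align*}
Every $p_i>0$, so $\Pr(E_-)>0$ holds as soon as some $i\in S$ has $\weight(i,S)<\weight(i,[n])$. I would therefore argue by contradiction and assume that $\weight(i,S)=\weight(i,[n])$ for every $i\in S$.

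Under this assumption, each requester $i\in S$ hits $S$ with probability one, so $h_d(\weight(i,S)/\weight(i,[n]))=1$ for all $i\in S$. Splitting $f(S)$ into requesters inside and outside $S$ gives
\begin{align*}
    f(S)=p(S)+\sum_{i\notin S}p_i\,h_d\!\left(\frac{\weight(i,S)}{\weight(i,[n])}\right)\geq p(S).
\end{align*}
Hence \eqref{eqn:cut-con-stability} holds for $S$. By the standing hypothesis of this part of the proof, it must then hold with equality and $S$ must not be sampling-separating.

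Equality forces every term in the sum over $i\notin S$ to vanish. Since $p_i>0$, and $h_d(x)=0$ only at $x=0$ on $[0,1]$, this gives $\weight(i,S)=0$ for all $i\notin S$. Together with $\weight(i,S)=\weight(i,[n])$ for $i\in S$, this is exactly the definition of $S$ being sampling-separating, which is a contradiction. Thus some $i\in S$ has $\weight(i,S)<\weight(i,[n])$, and $\Pr(E_-)>0$.

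There is no real obstacle in this argument. The only point needing care is that both failure modes of the cut criterion are covered: a strict violation is ruled out directly because $f(S)\geq p(S)$, and the equality case is ruled out by the sampling-separating characterization.
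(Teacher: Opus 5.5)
Your proof is correct and follows essentially the paper's argument: both split $f(S)$ by whether the requester lies in $S$, observe that $\Pr(E_-)=0$ forces $\weight(i,S)=\weight(i,[n])$ for $i\in S$, and then use $f(S)\le p(S)$, hence equality, to force $\weight(i,S)=0$ for $i\notin S$, so that $S$ is sampling-separating and attains equality, contrary to its choice. The only difference is cosmetic: you write $\Pr(E_-)=\sum_{i\in S}p_i\bigl(1-\weight(i,S)/\weight(i,[n])\bigr)^d$ explicitly, whereas the paper phrases the same step through the identity $\Pr(r_1\notin S,\ A_1\cap S\neq\emptyset)-\Pr(E_-)=f(S)-p(S)$.
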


We first finish the instability proof and then present the proof of \Cref{cla:prob-eminus}.
Note that the event $E_-$ belongs to $\mathcal F_1$, and $X_1=-1$ on $E_-$.
Multiplying the conditional inequality \eqref{eqn:conditional-ineqn} by $\indc{E_-}$ and taking expectations therefore yields
\[
    \Expect_{\zeros}[\indc{E_-}X_{t\wedge\tau}]
    \leq \Expect_{\zeros}[\indc{E_-}X_1]
    =-\Pr(E_-)<0
    \qquad\text{for every }t\geq1.
\]
On the other hand, $\Expect_{\zeros}[\tau]<\infty$ implies $\tau<\infty$ almost surely.
Since $\bds_{\tau} = \zeros$, we have $X_\tau=0$.
Consequently, $X_{t\wedge\tau}\to0$ almost surely as $t\to\infty$.
Moreover, $X_0=0$ and $|X_{k+1}-X_k|\leq1$, so
\[
    |\indc{E_-}X_{t\wedge\tau}|
    \leq |X_{t\wedge\tau}|
    \leq t\wedge\tau
    \leq\tau.
\]
Thus $\tau$ is an integrable bound independent of $t$.
The dominated-convergence theorem gives
\[
    \lim_{t\to\infty}
    \Expect_{\zeros}[\indc{E_-}X_{t\wedge\tau}]
    =0,
\]
contradicting the uniform upper bound $-\Pr(E_-)<0$.
Hence $\Expect_{\zeros}[\tau_{\zeros}^+]=\infty$.
Therefore, no admissible stationary policy stabilizes the system.
For related (super)martingale criteria for non-positive recurrence of countable Markov chains, see~\cite[Chapter~2]{fayolle1995topics}.

\begin{proof}[Proof of \Cref{cla:prob-eminus}]
Decomposing according to whether the requester lies in $S$ and whether $A_1$ intersects $S$, we have
\begin{align*}
    \Pr(A_1\cap S\neq\emptyset)
    &=\Pr(r_1\in S,\ A_1\cap S\neq\emptyset)
      +\Pr(r_1\notin S,\ A_1\cap S\neq\emptyset),\\
    \sum_{i\in S}p_i=\Pr(r_1\in S)
    &=\Pr(r_1\in S,\ A_1\cap S\neq\emptyset)+\Pr(E_-).
\end{align*}
Subtracting the second equality from the first gives
\[
    \Pr(r_1\notin S,\ A_1\cap S\neq\emptyset)-\Pr(E_-)
    =\Pr(A_1\cap S\neq\emptyset)-\sum_{i\in S}p_i\leq0.
\]
If $\Pr(E_-)=0$, then we have $\Pr(r_1\notin S,\ A_1\cap S\neq\emptyset) = 0$.
Since every $p_i>0$ and the $d$ provider draws are independent, this forces $\weight(i,S)=\weight(i,[n])$ for $i\in S$ and $\weight(i,S)=0$ otherwise.
Hence, $S$ is sampling-separating and attains equality, contrary to its choice.
\end{proof}

\section{Deferred proofs for \Cref{sec:sbm-model}}\label{sec:proof-sec-sbm}

We first provide the rate-function properties in \Cref{sec:proof-lem-char-Iit} and the exposure estimates in \Cref{sec:sbm-exposure-estimates}, which will be used by later proofs in this section.

\subsection{Properties of the exposure-rate function}\label{sec:proof-lem-char-Iit}

\begin{lemma}\label{lem:char-Iit}
    For each $i \in [m]$, let $\xi_i := \sum_{j=1}^m \alpha_j c_{ij} / \lambda_j$.
    Then, $I_i(\cdot)$ is continuous and strictly decreasing over $[0, \xi_i]$ with $I_i(0) = \lambda_i$ and $I_i(t) = 0$ for all $t \geq \xi_i$.
    Moreover, if $\lambda_1 = \ldots = \lambda_m = \lambda$, then, for every $i \in [m]$, $\xi_i = 1$ and $I_i(t) = \lambda(1 - t + t \ln t)$ for $t \in [0, 1]$.
\end{lemma}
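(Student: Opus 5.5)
We will work directly with the convex program defining $I_i(t)$. Write $\mu_j:=\alpha_jc_{ij}>0$ and $\phi_j(x):=x\ln(x/\mu_j)-x+\mu_j$. Each $\phi_j$ is strictly convex, nonnegative on $[0,\infty)$, vanishes only at $x=\mu_j$, and is strictly decreasing on $[0,\mu_j]$. The objective $\Phi(\bdx):=\sum_j\phi_j(x_j)$ is therefore a strictly convex function with unique global minimizer $\bdx=\boldsymbol\mu$. At that point $\Phi=0$ and $\sum_j\mu_j/\lambda_j=\xi_i$.

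\textbf{Endpoint values.} For $t\geq\xi_i$ the point $\boldsymbol\mu$ is feasible, so $I_i(t)=0$. For $t=0$ the only feasible point is $\bdx=\zeros$, because all $\lambda_j>0$. This gives $I_i(0)=\sum_j\mu_j=\lambda_i$.

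\textbf{Strict monotonicity on $[0,\xi_i]$.} Fix $t<\xi_i$. The constraint set is compact and $\Phi$ is continuous, so the minimum is attained, and by strict convexity the minimizer $\bdx^*(t)$ is unique. The constraint must be active at $\bdx^*(t)$: if $\sum_jx^*_j/\lambda_j<t$, then moving slightly toward $\boldsymbol\mu$ keeps the point feasible and strictly decreases $\Phi$, since $\boldsymbol\mu$ is infeasible. Now take $s<t\leq\xi_i$. The minimizer $\bdx^*(s)$ satisfies $\sum_j x^*_j(s)/\lambda_j=s<\xi_i$, so $\bdx^*(s)\neq\boldsymbol\mu$. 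Consider $\bdy:=\theta\bdx^*(s)+(1-\theta)\boldsymbol\mu$, where $\theta<1$ is chosen so that $\bdy$ is feasible for $t$. By strict convexity, $\Phi(\bdy)<\theta\Phi(\bdx^*(s))\leq I_i(s)$. Hence $I_i(t)<I_i(s)$.

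\textbf{Continuity.} By convexity of $\Phi$ and of the feasible region, the value function $I_i$ is convex in $t$: a convex combination of feasible points for $s$ and $t$ is feasible for the corresponding combination of $s$ and $t$. So $I_i$ is convex and finite on $[0,\infty)$, which gives continuity on the open interval $(0,\infty)$. At $t=0$ we argue directly. Take the feasible point $\bdx=(t/\xi_i)\boldsymbol\mu$, for which $\Phi\to\lambda_i$ as $t\downarrow0$, so $\limsup_{t\downarrow0}I_i(t)\leq\lambda_i$. Monotonicity gives $I_i(t)\leq I_i(0)=\lambda_i$, and lower semicontinuity follows because $\Phi$ is continuous on the shrinking compact feasible sets.

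\textbf{Equal-degree case.} When every $\lambda_j=\lambda$, we get $\xi_i=\sum_j\alpha_jc_{ij}/\lambda=\lambda_i/\lambda=1$. To compute $I_i(t)$ for $t\in(0,1]$, use Lagrange multipliers on the active constraint $\sum_jx_j=\lambda t$. Stationarity gives $\ln(x_j/\mu_j)=\text{const}$, so $x_j=t\mu_j$. Substituting yields
\[
\Phi=\sum_j\bigl(t\mu_j\ln t-t\mu_j+\mu_j\bigr)=\lambda(1-t+t\ln t).
\]
Convexity guarantees this KKT point is the global minimizer. The case $t=0$ agrees with $I_i(0)=\lambda_i=\lambda$.

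The only mildly delicate step is the strict monotonicity/active-constraint argument, together with continuity at the boundary $t=0$, where the feasible set degenerates. The interpolation-toward-$\boldsymbol\mu$ trick handles monotonicity cleanly, and the explicit scaled feasible point handles $t\downarrow0$.
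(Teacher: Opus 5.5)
Your proof is correct and follows essentially the same route as the paper: strict convexity of the objective with its unique zero at $\boldsymbol\mu$ gives the endpoint values, interpolating a minimizer toward $\boldsymbol\mu$ gives strict monotonicity, and convexity of the value function plus the shrinking feasible sets near $t=0$ give continuity. The differences are cosmetic: you derive the equal-degree formula from the KKT point $\bdx=t\boldsymbol\mu$ (multiplier $-\ln t\geq0$) rather than the log-sum inequality, and you get continuity at $t=\xi_i$ from convexity on all of $(0,\infty)$ rather than from the explicit feasible vectors $(t/\xi_i)\boldsymbol\mu$.
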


\begin{proof}

Fix $i$ and write $y_j:=\alpha_jc_{ij}$ and
\[
    F(\bdx):=\sum_j\left(x_j\ln(x_j/y_j)-x_j+y_j\right),
\]
with the convention $0\ln0=0$.
This strictly convex function has unique minimum $F(\bdy)=0$.
Hence $I_i(0)=\sum_jy_j=\lambda_i$ and $I_i(t)=0$ for $t\geq\xi_i$.
For $0\leq t_1<t_2\leq\xi_i$, a minimizer $\bdx'$ at $t_1$ exists by compactness and differs from $\bdy$.
Moving a sufficiently small positive distance from $\bdx'$ toward $\bdy$ gives a feasible point at $t_2$ with strictly smaller objective, proving $I_i(t_2)<I_i(t_1)$.
The same argument shows that the constraint binds for the minimizer at $t<\xi_i$.

The continuity of $I_i(\cdot)$ in the interior of $[0, \xi_i]$ follows from convexity of the value function.
At $t = 0$, the feasible vectors converge uniformly to $\zeros$ as $t\downarrow0$.
At $t = \xi_i$, the feasible vectors $(t/\xi_i)\bdy$ have objective tending to zero.
This proves the continuity of $I_i(\cdot)$ at both endpoints.

When $\lambda_1= \ldots = \lambda_m=\lambda$, we have $\xi_i=1$ since $\sum_{j=1}^m \alpha_j c_{ij} = \lambda_i = \lambda$.
For $0\leq t\leq1$, the log-sum inequality \cite[Theorem 2.7.1]{cover1991elements} gives
\[
    \sum_j x_j\ln(x_j/y_j)\geq t\lambda\ln t
    \qquad\text{if }\sum_jx_j=t\lambda,
\]
with equality attained by $\bdx=t\bdy$.
Thus $I_i(t)=\lambda(1-t+t\ln t)$.
\end{proof}

\subsection{Exposure estimates and singleton obstructions}\label{sec:sbm-exposure-estimates}

Let $\lambda_{\min}:=\min_j\lambda_j$.
For each vertex $u$, define the weighted degree
\[
 T_u:=\frac1{L_n}\sum_{j=1}^m\frac{|N(u,H_j)|}{\lambda_j}.
\]
We first bound the lower tail of $T_u$ and compare $T_u$ with the actual column exposure $B_u$, then establish the singleton obstructions used later in the instability proofs.

The next lemma gives an upper bound on the probability that a fixed vertex in $H_i$ has weighted degree at most $t$, expressed in terms of $I_i(t)$.

\begin{lemma}\label{lem:weighted-degree-tail}
Suppose that $L_n$ satisfies $L_0\leq L_n\leq n / c_{\max}$, where $c_{\max}:=\max_{i,j}c_{ij}$, and $L_0>0$ is fixed.
Then, for every $i\in[m]$, $u\in H_i$, and $0<t<\xi_i$,
\begin{equation}\label{eqn:weighted-degree-tail}
 \Prob(T_u\leq t)\leq \exp\bigl(1-L_nI_i(t)\bigr).
\end{equation}
\end{lemma}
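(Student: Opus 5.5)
This is a Chernoff bound for a weighted sum of independent binomials, followed by a convex-duality step that identifies the optimized exponent with $I_i(t)$. For $u\in H_i$, the counts $D_j:=|N(u,H_j)|$ are independent with $D_j\sim\mathrm{Bin}(n_j-\mathbf 1\{j=i\},Q_{ij})$, and
\[
T_u=\frac1{L_n}\sum_j \frac{D_j}{\lambda_j}.
\]

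First, the moment generating function. For any $\theta\geq0$, Markov's inequality gives
\[
\Prob(T_u\leq t)\leq e^{\theta L_n t}\,\Expect e^{-\theta\sum_j D_j/\lambda_j}.
\]
Each factor is bounded using $1-Q(1-e^{-s})\leq\exp(-Q(1-e^{-s}))$:
\[
\Expect e^{-\theta D_j/\lambda_j}\leq \exp\bigl(-(n_j-\mathbf 1\{j=i\})\,Q_{ij}\,(1-e^{-\theta/\lambda_j})\bigr).
\]
Here $n_jQ_{ij}=\alpha_j c_{ij}L_n=:y_jL_n$. The removed self-vertex costs at most $Q_{ii}(1-e^{-\theta/\lambda_i})\leq 1$, because $Q_{ii}\leq c_{\max}L_n/n\leq1$. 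This is exactly where the hypothesis $L_n\leq n/c_{\max}$ is used, and it is the source of the additive $1$ in the exponent. Hence
\[
\Prob(T_u\leq t)\leq \exp\Bigl(1-L_n\bigl[\textstyle\sum_j y_j(1-e^{-\theta/\lambda_j})-\theta t\bigr]\Bigr).
\]

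Second, it remains to show
\[
\sup_{\theta\geq0}\Bigl[\textstyle\sum_j y_j(1-e^{-\theta/\lambda_j})-\theta t\Bigr]\geq I_i(t).
\]
I expect this to be the main step. I would prove it by Lagrangian duality for the convex program defining $I_i(t)$, which minimizes $F(\bdx)$ subject to $\sum_j x_j/\lambda_j\leq t$. For a multiplier $\theta\geq0$, the dual function is $\inf_{\bdx\geq0}F(\bdx)+\theta(\sum_j x_j/\lambda_j-t)$. This infimum separates coordinatewise and is attained at $x_j=y_je^{-\theta/\lambda_j}$, which gives value $\sum_j y_j(1-e^{-\theta/\lambda_j})-\theta t$. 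Weak duality alone only gives $\leq I_i(t)$, so I need strong duality in the other direction.

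For strong duality, Slater's condition holds because $t>0$ makes $\bdx=\zeros$ strictly feasible. Alternatively, I can verify it directly. Since $0<t<\xi_i$, the function $g(\theta):=\sum_j y_je^{-\theta/\lambda_j}/\lambda_j$ decreases continuously from $\xi_i$ to $0$, so there is $\theta^*>0$ with $g(\theta^*)=t$. The point $x^*_j=y_je^{-\theta^*/\lambda_j}$ is then feasible with the constraint binding. Substituting $\ln(x^*_j/y_j)=-\theta^*/\lambda_j$ gives
\[
F(\bdx^*)=-\theta^*t+\sum_j (y_j-x^*_j),
\]
which equals the dual value at $\theta^*$. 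Primal feasibility then gives $I_i(t)\leq F(\bdx^*)$, and the dual value is $\leq I_i(t)$, so the dual value at $\theta^*$ equals $I_i(t)$.

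Taking $\theta=\theta^*$ in the bound from the first step yields \eqref{eqn:weighted-degree-tail}.
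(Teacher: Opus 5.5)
Your proposal is correct and follows the paper's proof. Both use the same exponential Markov bound with $1-Q(1-e^{-s})\le \exp(-Q(1-e^{-s}))$, the same $Q_{ii}(1-e^{-\theta/\lambda_i})\le 1$ correction for the excluded self-vertex, and the same tilted profile $x_j=\alpha_jc_{ij}e^{-\theta/\lambda_j}$ with the binding constraint to identify the exponent with $I_i(t)$. Your explicit check that $F(\bdx^*)$ equals the dual value at $\theta^*$ simply spells out what the paper compresses into ``Lagrange multipliers give the profile.''
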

\begin{proof}
Put $\mu_j=\alpha_jc_{ij}$.
The unconstrained minimizer of the optimization for $I_i(\cdot)$ is $\boldsymbol{\mu}:=(\mu_j)_{j\in[m]}$, which satisfies $\sum_j\mu_j/\lambda_j=\xi_i>t$.
Thus the constraint for $I_i(t)$ is active under the unique minimizing profile for $I_i(t)$, and Lagrange multipliers give the profile
\begin{equation}\label{eqn:tilted-degree-profile}
 x_j(t)=\mu_j e^{-\theta/\lambda_j},\qquad
 \sum_j\frac{x_j(t)}{\lambda_j}=t,
 \qquad
 I_i(t)=\sum_j\mu_j(1-e^{-\theta/\lambda_j})-\theta t,
\end{equation}
where $\theta=\theta_i(t)>0$ is the unique Lagrange multiplier determined by the active constraint.
Since $|N(u,H_j)|$ are independent binomial variables, exponential Markov and $\ln(1+z)\leq z$ give
\begin{align*}
 \ln\Prob(T_u\leq t)
 &\leq\theta L_nt+
 \sum_j(n_j-\indc{i=j})\ln\bigl(1+Q_{ij}(e^{-\theta/\lambda_j}-1)\bigr)\\
 &\leq-L_nI_i(t)+Q_{ii}(1-e^{-\theta/\lambda_i})
 \leq-L_nI_i(t)+1,
\end{align*}
as desired.
\end{proof}

We next compare $T_u$ with the actual exposure $B_u$.
The first estimate controls how much $T_u$ can exceed $B_u$ uniformly over all vertices, and the second bounds the expected absolute error conditional on a prescribed neighborhood.

\begin{lemma}\label{lem:exposure-comparisons}
For every fixed $b>0$, if $L_n\geq b\ln n$, then
\begin{equation}\label{eqn:sbm-one-sided-exposure}
 \max_{u\in[n]}(T_u-B_u)_+\toprob0.
\end{equation}
Also, for any fixed constants $0<c_1<c_2<\infty$, as $L_n\to\infty$,
\begin{equation}\label{eqn:sbm-conditional-exposure}
 \Expect\bigl[|B_u-T_u|\mid N(u)=S\bigr]=O(L_n^{-1})
\end{equation}
for every vertex $u$ and every possible neighborhood $S$ such that $c_1L_n\leq|S|\leq c_2L_n$ and $\Pr(N(u)=S)>0$.
\end{lemma}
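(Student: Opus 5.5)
Write $B_u=\sum_{v\in\bar N(u)}1/a_v=1/a_u+\sum_{v\in N(u)}1/a_v$. For $v\in N(u)\cap H_j$ we have $a_v=1+|N(v)|$, and $\Expect a_v\approx\lambda_jL_n$. So $B_u-T_u$ is, up to the small term $1/a_u$, the sum over $v\in N(u)\cap H_j$ of $1/a_v-1/(\lambda_jL_n)$. The plan is to control each summand through concentration of $a_v$ around $\lambda_jL_n$.

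\textbf{Step 1: one-sided uniform bound \eqref{eqn:sbm-one-sided-exposure}.} We need $T_u-B_u\leq\delta$ for all $u$ simultaneously, with high probability. It suffices to show two things.
\begin{itemize}
\item[(a)] With high probability, every vertex $v\in H_j$ has $a_v\leq(1+\eta)\lambda_jL_n$ for a small fixed $\eta$.
\item[(b)] With high probability, every $u$ has $|N(u)|\leq CL_n$ for some constant $C$.
\end{itemize}
Given these, $B_u\geq\sum_j|N(u,H_j)|/((1+\eta)\lambda_jL_n)=T_u/(1+\eta)$, so $T_u-B_u\leq\eta T_u\leq\eta C/\lambda_{\min}$, which can be made arbitrarily small.

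Both (a) and (b) are upper-tail Chernoff bounds for binomial degrees with mean $\Theta(L_n)$, combined with a union bound over $n$ vertices. The upper tail $\Pr(a_v\geq(1+\eta)\lambda_jL_n)\leq\exp(-c(\eta)L_n)$ only gives $n^{-c(\eta)b}$, which may not beat the factor $n$ when $\eta$ is small. I therefore expect the uniform version to be the main obstacle.

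To get around it, I would use the Chernoff bound for the event that at least $M$ neighbors of $u$ have $a_v>(1+\eta)\lambda_jL_n$. Exposing $N(u)$ first, the degrees $|N(v)\setminus\{u\}|$ are nearly independent binomials: they share only the edges among $N(u)$. Each exceeds the threshold with probability $q=n^{-c(\eta)b}$. The probability that more than $M$ of them do is at most $\binom{CL_n}{M}q^M\leq(CL_nq)^M$, and choosing the fixed $M$ large makes this $o(1/n)$. The $M$ bad neighbors contribute at most $M/L_n$ to $T_u$, since each contributes $1/(\lambda_jL_n)$. Hence $T_u-B_u\leq\eta C'+M/(\lambda_{\min}L_n)\to\eta C'$, and letting $\eta\to0$ gives the claim. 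Edges inside $N(u)$ can shift each $a_v$ by at most the number of such edges, which is $O(1)$ with high probability under a union bound. Alternatively, I would simply drop these edges, since removing them only makes $a_v$ smaller and so is the favorable direction for an upper bound on $T_u-B_u$.

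\textbf{Step 2: conditional $L^1$ bound \eqref{eqn:sbm-conditional-exposure}.} Condition on $N(u)=S$. Each $v\in S\cap H_j$ then has $a_v=2+Y_v+Z_v$, where:
\begin{itemize}
\item $Y_v$ is the number of neighbors of $v$ outside $S\cup\{u\}$, a binomial variable with mean $\lambda_jL_n(1+O(L_n/n))$ minus $O(1)$, independent of the conditioning;
\item $Z_v$ counts edges inside $S$, with $\Expect Z_v=O(L_n^2/n)$.
\end{itemize}
Write $1/a_v-1/(\lambda_jL_n)=(\lambda_jL_n-a_v)/(a_v\lambda_jL_n)$. On the event $a_v\geq\lambda_jL_n/2$ this is bounded by $2|a_v-\lambda_jL_n|/(\lambda_jL_n)^2$, whose expectation is $O(\sqrt{L_n}/L_n^2+L_n^{-2}\cdot(1+L_n^2/n))$ by the variance bound. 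Off that event the summand is at most $1$, and the event has probability $e^{-cL_n}$ by Chernoff.

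Summing over the $|S|\leq c_2L_n$ neighbors gives $O(L_n^{-1/2})$ plus the $1/a_u$ term, which is $O(L_n^{-1})$. To reach the claimed $O(L_n^{-1})$, I would instead use $\Expect|a_v-\lambda_jL_n|$ more carefully. The naive bound gives $L_n\cdot\sqrt{L_n}/L_n^2=L_n^{-1/2}$, so I would separate mean and fluctuation terms. The fluctuations are centered and nearly independent across $v$, since $Y_v$ and $Y_{v'}$ share only the edge $vv'$, which is excluded. Applying Cauchy--Schwarz to the sum, $\Expect|\sum_v(Y_v-\Expect Y_v)|/L_n^2\leq\sqrt{L_n\cdot L_n}/L_n^2=O(L_n^{-1})$. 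The deterministic bias, the second-order terms $(a_v-\lambda_jL_n)^2/L_n^3$, and the $Z_v$ contributions are each $O(L_n^{-1})$ after summation, using $L_n\leq n/c_{\max}$ so that $L_n^2/n=O(L_n)$. The lower bound $|S|\geq c_1L_n$ is used only to keep the normalization $T_u$ of order $1$. Controlling the bias term uniformly is the delicate bookkeeping here, but it is routine.
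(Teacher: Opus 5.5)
Your Step~1 takes a different route from the paper and can be completed, but the step that disposes of the edges inside $N(u)$ fails as written.

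\textbf{The internal-edge step.} Conditional on $N(u)=S$ with $|S|=\Theta(L_n)$, the number of edges inside $S$ is a sum of independent indicators with mean $\Theta(L_n^3/n)$. It is therefore not $O(1)$ uniformly in $u$ when $L_n\gg n^{1/3}$, a regime the lemma allows. Your fallback of dropping these edges goes the wrong way. Removing edges lowers $a_v$ and raises $B_u$, whereas an upper bound on $T_u-B_u$ needs an \emph{upper} bound on $a_v$. A lower bound on the modified exposure says nothing about the true $B_u$.

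Either of two repairs works.
\begin{enumerate}
\item[(i)] Split cases. If $L_n\geq K\ln n$ with $K$ large in terms of $\eta$ and $\lambda_{\min}$, your bound (a) already holds by Chernoff and a union bound. The bad-neighbour count is then needed only when $b\ln n\leq L_n\leq K\ln n$, where the internal-edge count really is $O(1)$ uniformly with high probability.
\item[(ii)] For each fixed $M$-subset $F\subseteq S$, bound the contribution of edges inside $F$ to $a_v$, $v\in F$, by $M-1$. The degrees of the vertices of $F$ into $[n]\setminus(F\cup\{u\})$ use disjoint edge sets, so they are independent given $N(u)=S$. This gives your $q^M$ estimate in every regime.
\end{enumerate}

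\textbf{Comparison with the paper.} The paper never controls individual neighbour degrees. The tangent-line inequality gives $B_u\geq T_u-Z_u$ with $Z_u=\sum_j\sum_{v\in S\cap H_j}(a_v-\lambda_jL_n)/(\lambda_j^2L_n^2)$. This is a linear combination of independent edge indicators with coefficients $O(L_n^{-2})$ and total presence probability $O(L_n^2)$; internal edges simply get a doubled coefficient. Bernstein then yields $\Pr(Z_u>\epsilon\mid S)\leq e^{-c_\epsilon L_n^2}$, which survives the union bound over $n$ vertices for every $b>0$ and sidesteps the dependence issue entirely.

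\textbf{Step~2.} Your linear-plus-quadratic expansion with Chernoff truncation is sound. The paper instead handles the direction $B_u-T_u$ via the generating-function bound $\Expect[a_v^{-1}\mid S]\leq1/\mu_v$ and then uses $|y|=y+2(-y)_+$, which avoids truncation. However, do not bound the means of $Y_v$ and $Z_v$ separately. As written, $\sum_{v\in S}\Expect[Z_v]/L_n^2$ is of order $L_n/n$, which is not $O(L_n^{-1})$ when $L_n\gg\sqrt n$.

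The fix is that conditioning on $N(u)=S$ leaves every edge not incident to $u$ untouched. Hence for $u\in H_i$ and $v\in S\cap H_j$ you have exactly $\Expect[a_v\mid N(u)=S]=\lambda_jL_n+2-Q_{ij}-Q_{jj}$, so the bias is $O(1)$ per neighbour. Internal edges then enter only through fluctuations. There, the variance of the internal-edge count is $O(L_n^3/n)=O(L_n^2)$, and this is the right use of $L_n\leq n/c_{\max}$.

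Finally, the hypothesis $|S|\geq c_1L_n$ is what makes $1/a_u=1/(|S|+1)=O(L_n^{-1})$. Boundedness of $T_u$ comes from $|S|\leq c_2L_n$, not from $c_1$.
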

\begin{proof}
Write $L:=L_n$.
Fix $u\in H_i$ and condition only on a possible neighborhood $N(u)=S$ with $|S|\leq CL$, where $C>0$ is fixed.
All edges not incident to $u$ remain independent.
For all $j \in [m]$ and $v\in S\cap H_j$, we have $\Expect[a_v\mid S]=\lambda_jL+2-Q_{ij}-Q_{jj}$.
The tangent inequality $1/y\geq1/z-(y-z)/z^2$ for $y,z>0$ gives
\[
 B_u\geq T_u-Z_u,\qquad \text{where }
 Z_u:=\sum_{j=1}^m\sum_{v\in S\cap H_j}
       \frac{a_v-\lambda_jL}{\lambda_j^2L^2}.
\]
Conditional on $S$, we have $\Expect Z_u=O(L^{-1})$, and $Z_u-\Expect[Z_u\mid S]$ is a sum of independent centered Bernoulli terms.
Each edge has coefficient at most $2/(\lambda_{\min}^2L^2)$, while the sum of the presence probabilities of the relevant edges is at most $|S|c_{\max}L=O(L^2)$.
Thus $\Var(Z_u\mid S)=O(L^{-2})$.
Bernstein's inequality yields, for each fixed $\epsilon>0$ and all large $n$,
\[
 \Prob(Z_u>\epsilon\mid S)\leq e^{-c_\epsilon L^2}.
\]
If $L\geq b\ln n$, a Chernoff bound \cite[Theorem~2]{janson2016large} and a union bound give $\max_{u \in [n]}|N(u)|\leq CL$ with high probability for a sufficiently large fixed $C$.
Averaging over $S$ and union bounding over all vertices $u$, using $ne^{-c_\epsilon L^2}\to0$, proves \eqref{eqn:sbm-one-sided-exposure}.

To prove \eqref{eqn:sbm-conditional-exposure}, fix $0<c_1<c_2<\infty$ and apply the above conditional mean and variance bounds for $Z_u$ with $C=c_2$.
For $v\in S\cap H_j$, write $a_v=2+X_v$, where $X_v$ is a sum of independent Bernoulli variables with mean $\mu_v=\lambda_jL-Q_{ij}-Q_{jj}>0$ for all large $L$.
Its generating function gives
\begin{align*}
 \Expect[a_v^{-1}\mid S]
 =\int_0^1 z\prod_e(1-p_e+p_ez)\,dz
 \leq\int_0^1 e^{-\mu_v(1-z)}\,dz
 \leq\frac1{\mu_v}
 =\frac1{\lambda_jL}+O(L^{-2}),
\end{align*}
where $p_e$ are the presence probabilities of the remaining edges incident to $v$.
Consequently,
\[
 \Expect[B_u-T_u\mid S]\leq\frac1{|S|+1}+O(L^{-1}).
\]
Since $B_u\geq T_u-Z_u$, taking positive parts gives
\[
 (T_u-B_u)_+\leq(Z_u)_+.
\]
Taking conditional expectations and applying Cauchy--Schwarz, together with the conditional mean and variance bounds for $Z_u$, gives
\[
 \Expect[(T_u-B_u)_+\mid S]
 \leq\Expect[(Z_u)_+\mid S]
 \leq|\Expect[Z_u\mid S]|+\sqrt{\Var(Z_u\mid S)}=O(L^{-1}).
\]
Using $|y|=y+2(-y)_+$ proves \eqref{eqn:sbm-conditional-exposure} for every $S$ with $c_1L\leq|S|\leq c_2L$.
\end{proof}

The following lemma shows that, when $L_n\to\infty$ and $L_nI_i(t)$ is below $\ln n$ by a fixed proportion, a second-moment argument and the conditional comparison in \Cref{lem:exposure-comparisons} yield the low-exposure vertex.

\begin{lemma}\label{lem:low-exposure-witness}
Fix $i$, $0<t<\xi_i$, and $\delta>0$.
Suppose that $L_n\to\infty$ and $L_nI_i(t)\leq(1-\delta)\ln n$.
Then, for every fixed $\epsilon>0$, with high probability, there exists $u\in H_i$ such that $B_u\leq t+\epsilon$.
\end{lemma}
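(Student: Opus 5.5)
The plan is to find many vertices $u\in H_i$ whose neighborhood profile is atypically small (weighted degree $T_u\le t+\epsilon/2$), and then use \eqref{eqn:sbm-conditional-exposure} to transfer this to $B_u\le t+\epsilon$.

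\textbf{Step 1: target neighborhood profile.} Let $\bdx(t)$ be the tilted profile from \eqref{eqn:tilted-degree-profile}. Choose a slightly perturbed profile $\bdx'$ with $\sum_j x'_j/\lambda_j\le t+\epsilon/4$ and $x'_j>0$ for all $j$. By continuity of $F$, we can arrange $F(\bdx')\le I_i(t)+\eta$ with $\eta$ small, so that $L_n F(\bdx')\le(1-\delta/2)\ln n$. Define the event
\[
E_u:=\{|N(u,H_j)|=\lfloor x'_jL_n\rfloor\ \text{for all } j\}.
\]
On $E_u$ we have $T_u\le t+\epsilon/2$ for large $n$, and $|N(u)|\in[c_1L_n,c_2L_n]$ for suitable constants $c_1,c_2$. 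A local (Stirling) estimate for independent binomials gives
\[
\Prob(E_u)\ge \exp\bigl(-L_nF(\bdx')-O(\ln L_n)\bigr)\ge n^{-1+\delta/3}.
\]
Since $L_n=o(n)$ is needed for the binomial approximation, I would treat the dense case $L_n=\Theta(n)$ separately; there $I_i(t)\cdot L_n\gg\ln n$, which contradicts the hypothesis, so the case cannot occur.

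\textbf{Step 2: second moment.} Let $Y:=\sum_{u\in H_i}\indc{E_u}$, so that $\Expect Y\ge \alpha_i n^{\delta/3}\to\infty$. For $u\neq v$, the events $E_u$ and $E_v$ depend on each other only through the single edge $uv$. Conditioning on that edge changes each probability by a factor $1+O(1/L_n)$ relative to the unconditioned local estimate. Hence
\[
\Prob(E_u\cap E_v)\le(1+o(1))\Prob(E_u)\Prob(E_v),
\]
so $\Var Y=o((\Expect Y)^2)$, and Chebyshev gives $Y\ge \Expect Y/2$ with high probability.

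\textbf{Step 3: transfer to $B_u$.} By \eqref{eqn:sbm-conditional-exposure} and Markov's inequality, for each $u$,
\[
\Prob\bigl(|B_u-T_u|>\epsilon/2\mid E_u\bigr)=O(1/L_n)\to0.
\]
Let $Y'$ count the vertices $u$ for which both $E_u$ holds and $|B_u-T_u|\le\epsilon/2$. Then $\Expect Y'\ge(1-o(1))\Expect Y$, while $Y'\le Y$ and $\Expect Y\le(1+o(1))\Expect Y'$. Combined with concentration of $Y$ around its mean, this gives $Y'>0$ with high probability: the bad count $Y-Y'$ has expectation $o(\Expect Y)$, so Markov's inequality shows it is below $\Expect Y/4$ with high probability. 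Any $u$ counted by $Y'$ satisfies $B_u\le t+\epsilon$.

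\textbf{Main obstacle.} I expect the hardest part to be the decorrelation in Step 2, together with the precise local lower bound in Step 1. The degrees $|N(u,H_j)|$ and $|N(v,H_j)|$ share the edge $uv$, which shifts each target count by at most one. This perturbation should cost only a ratio of $1+O(1/L_n)$ in binomial point probabilities near the target $x'_jL_n$, but that bound has to be verified using $x'_j>0$.
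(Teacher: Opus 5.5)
Your approach is the paper's: candidate vertices in $H_i$ with community-neighbor counts $\lfloor x_jL_n\rfloor$ along the tilted profile, then a Stirling lower bound $\Expect Y\ge n^{\delta-o(1)}$, then a second moment obtained by conditioning on the shared edge $uv$, and finally the transfer to $B_u$ through \eqref{eqn:sbm-conditional-exposure} and Markov's inequality. Two minor points. First, the perturbation to $\bdx'$ is unnecessary, since $\bdx(t)$ already has positive entries and flooring gives $T_u\le t$. Second, instead of singling out $L_n=\Theta(n)$, use that $I_i(t)>0$ by \Cref{lem:char-Iit}. The hypothesis then forces $L_n\le(1-\delta)\ln n/I_i(t)=O(\ln n)$, which is what makes the Stirling error $O(\ln L_n+L_n^2/n)$ negligible against $\ln n$.

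The decorrelation claim in Step 2 is false as stated, and the verification you propose in your last paragraph would fail. Conditioning on the edge $uv$ being present does not change $\Prob(E_u)$ by a factor $1+O(1/L_n)$. It shifts the required count in $H_i$ from $k_i$ to $k_i-1$ among $N=n_i-2$ remaining vertices, and
\[
\frac{\Prob(\operatorname{Bin}(N,Q_{ii})=k_i-1)}{\Prob(\operatorname{Bin}(N,Q_{ii})=k_i)}=\frac{k_i}{N-k_i+1}\cdot\frac{1-Q_{ii}}{Q_{ii}}\longrightarrow\frac{x'_i}{\alpha_ic_{ii}}.
\]
For the tilted profile this limit is $e^{-\theta_i(t)/\lambda_i}<1$. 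It is a fixed constant because the target $x'_iL_n$ sits a constant factor below the mean $\alpha_ic_{ii}L_n$, not within $O(1)$ of it.

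What saves the step is different. The edge is present only with probability $Q_{ii}=c_{ii}L_n/n\to0$, and conditioning on its absence changes $\Prob(E_u)$ only by a factor $1+O(Q_{ii})$. Write $h_0,h_1$ for the two conditional probabilities and $h=(1-Q_{ii})h_0+Q_{ii}h_1$. Conditional independence given the edge gives $\Prob(E_u\cap E_v)=(1-Q_{ii})h_0^2+Q_{ii}h_1^2$. Since $h_1/h_0$ is bounded (indeed $h_1\le h_0$ eventually), this is at most $h^2(1+O(Q_{ii}))$. Hence $\Var Y/(\Expect Y)^2\le 1/\Expect Y+O(Q_{ii})\to0$. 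This is exactly the paper's argument, and with this correction your proof goes through.
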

\begin{proof}
Write $L:=L_n$ and fix $\epsilon>0$.
We will construct many vertices with $T_u\leq t$ and show that only a vanishing fraction have $B_u>T_u+\epsilon$.
Using the positive minimizing profile $x_j(t)$ in \eqref{eqn:tilted-degree-profile}, put $k_j:=\lfloor Lx_j(t)\rfloor$ and call a vertex $u\in H_i$ a candidate if $|N(u,H_j)|=k_j$ for every $j \in [m]$.
Since each $x_j(t)>0$ and $\sum_jx_j(t)/\lambda_j=t$, the choice of floors ensures that every candidate satisfies
\[
 T_u=\frac1L\sum_j\frac{k_j}{\lambda_j}\leq t,
 \qquad |N(u)|=\sum_jk_j=\Theta(L).
\]

By \Cref{lem:char-Iit}, $I_i(t)>0$, so the hypothesis implies $L=O(\ln n)$.
For a fixed $u\in H_i$, independence of its community neighbor counts gives
\[
 h:=\Prob(u\text{ is a candidate})
 =\prod_{j=1}^m\Prob\!\left(\operatorname{Bin}(n_j-\indc{i=j},Q_{ij})=k_j\right).
\]
Stirling's formula and the minimizing property of $(x_j(t))$ yield
\[
 h=\exp\{-LI_i(t)+O(\ln L+L^2/n)\}.
\]
Since $\ln L+L^2/n=o(\ln n)$, the number $X$ of candidates satisfies
\[
 \Expect X=n_ih\geq n^{\delta+o(1)}\longrightarrow\infty.
\]

To control fluctuations of $X$, we use the second-moment argument from \cite[Lemma~3.4]{kolokolnikov2014algebraic}.
For distinct $u,v\in H_i$, let $h_0,h_1$ be the probabilities that $u$ is a candidate conditional on $W(u,v)=0$ and $W(u,v)=1$, respectively.
By symmetry, these are also the corresponding probabilities for $v$, and
\[
 h=(1-Q_{ii})h_0+Q_{ii}h_1.
\]
Excluding their shared edge leaves $n_i-2$ possible neighbors in $H_i$, so it holds that
\[
 \frac{h_1}{h_0}
 =\frac{k_i}{n_i-1-k_i}\frac{1-Q_{ii}}{Q_{ii}}
 \longrightarrow\frac{x_i(t)}{\alpha_ic_{ii}}
 =e^{-\theta_i(t)/\lambda_i}<1.
\]
Thus $h_1\leq h_0$ for all sufficiently large $n$.
Conditional on the status of edge $\{u, v\}$, the remaining incident edges of $u$ and $v$ are independent, giving
\begin{align*}
 \Prob(u,v\text{ are candidates})
 =(1-Q_{ii})h_0^2+Q_{ii}h_1^2
 \leq h_0h\leq\frac{h^2}{1-Q_{ii}}.
\end{align*}
Consequently,
\[
 \frac{\Var X}{(\Expect X)^2}
 \leq\frac1{\Expect X}+\frac{Q_{ii}}{1-Q_{ii}}\longrightarrow0,
\]
because $\Expect X\to\infty$ and $Q_{ii}=c_{ii}L/n\to0$.
Chebyshev's inequality therefore gives $X/\Expect X\toprob1$.

It remains to compare $B_u$ with $T_u$ at the candidate vertices.
Let $Y$ be the number of candidates with $B_u>T_u+\epsilon$.
Every possible neighborhood $S$ of a candidate has size $\Theta(L)$, so \eqref{eqn:sbm-conditional-exposure} and Markov's inequality give, uniformly over these $S$,
\[
 \Prob(B_u>T_u+\epsilon\mid N(u)=S)
 \leq\frac{\Expect[|B_u-T_u|\mid N(u)=S]}{\epsilon}
 =O(L^{-1}).
\]
Averaging over these neighborhoods and summing over $u\in H_i$ yields $ \Expect Y=O(L^{-1})\Expect X=o(\Expect X). $
Concentration of $X$ and Markov's inequality for $Y$ imply that, with high probability, $X\geq\Expect X/2$ and $Y<\Expect X/2$.
Hence $X-Y>0$ with high probability, and any candidate not counted by $Y$ satisfies $B_u\leq T_u+\epsilon\leq t+\epsilon$.
\end{proof}

The preceding lemma treats $L_n\to\infty$.
For bounded $L_n$, the next lemma obtains a deficient singleton cut from an isolated star component of fixed size.

\begin{lemma}\label{lem:bounded-degree-obstruction}
For every fixed $M\geq L_0>0$, suppose that $L_n$ satisfies $L_0\leq L_n\leq M$.
Then, with high probability, a singleton cut violates \eqref{eq:er-closed-neighborhood-cut-SBM}.
\end{lemma}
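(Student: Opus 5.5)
We exhibit, with high probability, an isolated star component of fixed size and show that its center gives a deficient singleton cut. The first step is to compute the LHS of \eqref{eq:er-closed-neighborhood-cut-SBM} for such a star. Suppose a vertex $u$ together with $k$ leaves $v_1,\dots,v_k$ forms a connected component of $G$ that is a star centered at $u$. Then $a_u=k+1$ and $a_{v_\ell}=2$. For the singleton cut $\{u\}$, the sum $\sum_v h_d(W(v,u)/a_v)$ receives contributions only from $u$ and its leaves, giving
\[
    h_d\!\left(\frac1{k+1}\right)+k\,h_d\!\left(\frac12\right)
    \leq \frac{d}{k+1}+k\,h_d\!\left(\frac12\right).
\]
This is useless, since $k h_d(1/2)\geq k/2>1$ for $k\ge 3$. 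The right choice is therefore a leaf, not the center. For the singleton cut $\{v_1\}$, only $v_1$ and $u$ contribute, giving
\[
    h_d\!\left(\tfrac12\right)+h_d\!\left(\tfrac1{k+1}\right)
    =1-2^{-d}+1-\left(1-\tfrac1{k+1}\right)^d.
\]
For this to fall below $1$ we need $(k/(k+1))^d>1-2^{-d}$. This holds for $k$ large, because $(1-1/(k+1))^d\to1$ while $1-2^{-d}<1$. We therefore fix a constant $k=k(d)$ with $(k/(k+1))^d>1-2^{-d}$. Since $d$ is fixed, such a $k$ exists; $k\geq d2^d$ suffices. Then the leaf $\{v_1\}$ satisfies $\frac1n\sum_v h_d(W(v,v_1)/a_v)<\frac1n$, violating \eqref{eq:er-closed-neighborhood-cut-SBM} strictly.

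It remains to show that, with high probability, $G$ contains an isolated $K_{1,k}$ component. We place everything in one community, say $H_1$, and do a first/second moment computation on the count $X$ of vertex sets of size $k+1$ inducing exactly a star with no edges to the rest of the graph. For a fixed labelled choice, the probability is at least
\[
    \left(\frac{c_{\min}L_0}{n}\right)^k\bigl(1-c_{\max}M/n\bigr)^{(k+1)n}\geq \kappa\, n^{-k}
\]
for a constant $\kappa>0$, using $L_0\le L_n\le M$. The number of choices is $\Theta(n^{k+1})$, so $\Expect X=\Theta(n)\to\infty$.

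For the second moment, overlapping pairs of stars cannot both be isolated components unless they are identical. Disjoint pairs have joint probability $(1+o(1))$ times the product of their individual probabilities, since the only shared edge constraints are the $O(1)$ cross pairs, each absent with probability $1-O(1/n)$. Hence $\Var X=o((\Expect X)^2)$, and Chebyshev's inequality gives $X>0$ with high probability.

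The main obstacle is this second-moment bookkeeping, which is routine but requires care with the $(1-Q)^{-(k+1)^2}$ correction factors and with the inhomogeneous edge probabilities across communities. Restricting all star vertices to $H_1$ removes the inhomogeneity in the count, while the isolation factor still ranges over all communities and stays bounded below by $e^{-(k+1)c_{\max}M}(1+o(1))$.
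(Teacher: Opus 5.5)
Your proposal is correct and follows essentially the same route as the paper: a first/second moment count of isolated $K_{1,k}$ components inside a single community, with $k=d2^d$. Overlapping specifications are mutually exclusive, and disjoint ones are correlated only through the $(1-Q_{ii})^{-(k+1)^2}=1+O(n^{-1})$ factor; a leaf's singleton cut then has left-hand side $h_d(1/2)+h_d(1/(k+1))<1$ (your first-paragraph detour through the center is harmless, and your exact condition $(k/(k+1))^d>1-2^{-d}$ is just the paper's bound $h_d(1/(k+1))\leq d/(k+1)$ before it is applied).
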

\begin{proof}
Fix a community $H_i$ and put $k=d2^d$.
There are $\Theta(n^{k+1})$ star specifications inside $H_i$.
Each requires $k$ present edges and $O(n)$ absent edges, so its probability of forming an isolated component is $\Theta(n^{-k})$, uniformly on the stated interval.
Thus the number $X$ of such components has $\Expect X=\Theta(n)$.
Distinct overlapping component specifications cannot coexist, and disjoint ones have joint-to-product probability ratio $(1-Q_{ii})^{-(k+1)^2}=1+O(n^{-1})$ because their mutual absent edges are counted twice in the product.
Therefore $\Var X/(\Expect X)^2\leq1/\Expect X+O(n^{-1})=o(1)$.
For any leaf $u$ of such a star,
\[
 \sum_v h_d\left(\frac{W(v,u)}{a_v}\right)
 =h_d(1/2)+h_d(1/(k+1))
 \leq1-2^{-d}+\frac{d}{k+1}<1,
\]
so its singleton cut is deficient.
\end{proof}

\subsection{Proof of \Cref{lem:block-cut-instability}}\label{sec:proof-lem-block-cut-instability}

First suppose $L_n\to\infty$.
Choose a block indicator $\bdx$ satisfying \eqref{eqn:bc-general-violate}.
Set $S=\bigcup_{i:x_i=1}H_i$.
For $u\in H_i$, concentration of the finitely many block-neighbor counts gives $W(u,S)/a_u\toprob(R\bdx)_i$.
These ratios are bounded by one, so the convergence also holds in mean, uniformly over vertices of each fixed type.
Since $h_d$ is $d$-Lipschitz, averaging the absolute errors and applying Markov's inequality yields
\[
 \frac1n\sum_u h_d\left(\frac{W(u,S)}{a_u}\right)
 \toprob\sum_i\alpha_i h_d((R\bdx)_i)
 <\sum_i\alpha_ix_i=\frac{|S|}{n}.
\]
Thus a cut violates the necessary inequality with high probability.
For bounded $L_n$, \Cref{lem:bounded-degree-obstruction} gives a deficient singleton cut.
Every subsequence has a further subsequence on which $L_n$ is bounded or tends to infinity, so the two cases cover all admissible scales.
\Cref{thm:min-token-backpressure} proves instability.

\subsection{\texorpdfstring{Proof of \Cref{thm:instability-sbm}}{Proof of the exposure-threshold instability proposition}}\label{sec:proof-thm-instability-sbm}

Define $J_* := \min_{i \in [m]} I_i(1 / d)$, and assume $\limsup_{n\to\infty}L_nJ_*/\ln n<1$.
If $J_*=0$, nonnegativity of the rate functions implies $I_j(1/d)=0$ for some $j$.
By \Cref{lem:char-Iit}, this equality holds only when $1/d\geq\xi_j$, so $\xi_j\leq1/d$.
Since $\xi_1=1>1/d$ when $m=1$, this case has $m\geq2$.
Using $h_d(z)<dz$ for $0<z\leq1$ and symmetry of $(c_{ij})$ gives
\[
 \sum_i\alpha_i h_d(R(i,j))<d\sum_i\alpha_iR(i,j)
 =d\alpha_j\xi_j\leq\alpha_j.
\]
This strictly violates \eqref{eqn:bc-general-sbm}, so \Cref{lem:block-cut-instability} implies instability.

For the remaining case, suppose $J_*>0$.
First suppose $L_n\to\infty$, and choose a community $i$ attaining $J_*$.
Because $I_i(1/d)=J_*>0$ and $\limsup_{n\to\infty}L_nJ_*/\ln n<1$, continuity of $I_i$ allows fixed $t\in(0,1/d)$ and $\delta>0$ such that $L_nI_i(t)\leq(1-\delta)\ln n$ eventually.
Taking $0<\epsilon<1/d-t$ in \Cref{lem:low-exposure-witness} gives a vertex with $B_u<1/d$ with high probability.
Since $h_d(x)\leq dx$ for $x\in[0,1]$,
\[
 \sum_v h_d\left(\frac{W(v,u)}{a_v}\right)
 \leq d\sum_v\frac{W(v,u)}{a_v}=dB_u<1.
\]
Hence the singleton cut $\{u\}$ violates \eqref{eq:er-closed-neighborhood-cut-SBM}.
For bounded $L_n$, we combine \Cref{thm:min-token-backpressure} with \Cref{lem:bounded-degree-obstruction} to rule out stability.
Every subsequence has a further subsequence on which $L_n$ is bounded or tends to infinity, so these cases cover all admissible sequences.

\subsection{\texorpdfstring{Proof of \Cref{thm:stability-sbm}}{Proof of the SBM stability theorem}}\label{sec:proof-thm-stability-sbm}\label{sec:sbm-cut-estimates}

We first develop the block-approximation result.
Define the normalization matrix $D:=\diag{a_1,\ldots,a_n}$.
We compare the normalized sampling matrix $D^{-1}\weight$ with a deterministic block-averaged proxy $P_R\in[0,1]^{n\times n}$ defined as follows.
For $u \in H_i$ and $v \in H_j$, let
\begin{align*}
    P_R(u, v) := \frac{R(i, j)}{n_j} = \frac{c_{ij}}{n\lambda_i},
\end{align*}
and denote $P_R(u, S) := \sum_{v \in S} P_R(u, v)$ for every $S \subseteq [n]$.
Set
\[
    \rho:=\norm{D^{-1}\weight-P_R}_2,\qquad
    w_{\max}:=\max_{i,j\in[m]}\frac{c_{ij}}{\lambda_i}.
\]
The following lemma controls the block-approximation error.

\begin{lemma}\label{lem:sbm-mixing}
 Suppose $L_n\geq b\ln n$ for some fixed $b>0$, and $\min_u a_u\geq cL_n$ with high probability for some fixed $c>0$.
 Then $\norm{D^{-1}\weight-P_R}_2=O_{\Prob}(L_n^{-1/2})\toprob0$.
\end{lemma}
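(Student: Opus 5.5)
We will split $D^{-1}\weight-P_R$ into a centered random part and deterministic or diagonal corrections, and bound each piece in spectral norm. Write $\bar W:=\Expect\weight$ off the diagonal, so $\bar W(u,v)=Q_{ij}$ for $u\in H_i$, $v\in H_j$, $u\neq v$. Let $\bar D:=\diag{\lambda_{i(u)}L_n}$, where $i(u)$ denotes the community of $u$. Observe that $\bar D^{-1}\bar W$ agrees with $P_R$ off the diagonal, since $Q_{ij}/(\lambda_iL_n)=c_{ij}/(n\lambda_i)$. We then decompose
\[
 D^{-1}\weight-P_R
 = D^{-1}(\weight-\Expect\weight-I) + D^{-1}I + (D^{-1}-\bar D^{-1})\Expect\weight + (\bar D^{-1}\Expect\weight-P_R),
\]
with the diagonal of $\Expect\weight$ taken to be zero and the self-loops isolated in the $I$ term. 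The second term has norm $1/\min_ua_u=O(L_n^{-1})$ on the high-probability event $\min_u a_u\geq cL_n$. The last term is diagonal with entries $O(1/n)$.

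For the centered term, $\norm{D^{-1}(\weight-\Expect\weight-I)}_2\leq (cL_n)^{-1}\norm{\weight-I-\Expect\weight}_2$. We will invoke a standard spectral norm bound for inhomogeneous Erd\H{o}s--R\'enyi adjacency matrices with maximal edge probability $q=c_{\max}L_n/n$: when $nq\gtrsim\ln n$, the bound $\norm{A-\Expect A}_2=O_{\Prob}(\sqrt{nq})$ holds (Feige--Ofek, or Bandeira--van Handel, or the matrix Bernstein inequality with an extra $\sqrt{\ln n}$ that is absorbed since $L_n\geq b\ln n$). This is exactly where the hypothesis $L_n\geq b\ln n$ enters. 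It gives $O_{\Prob}(\sqrt{L_n})/L_n=O_{\Prob}(L_n^{-1/2})$. If only matrix Bernstein is available, we obtain $O(\sqrt{L_n\ln n})/L_n$, which is not good enough when $L_n\asymp\ln n$; in that case we would cite the sharp bound of Feige--Ofek type, valid for $nq\geq b\ln n$.

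For the degree-correction term $(D^{-1}-\bar D^{-1})\Expect\weight$, we write the row $u$ factor as $(\lambda_iL_n-a_u)/(a_u\lambda_iL_n)$. Since $\Expect\weight$ has rank at most $m$ with block constant entries of order $L_n/n$, we can bound $\norm{\Delta\Expect\weight}_2\leq\norm{\Delta\ones_{H_j}}_2\cdot O(L_n/n)\cdot\sqrt{n}$ summed over $j$, where $\Delta$ is the diagonal correction. This reduces to controlling $\bigl(\sum_u(a_u-\lambda_{i(u)}L_n)^2\bigr)^{1/2}/(L_n^2)\cdot L_n/\sqrt n$. Since $\Expect(a_u-\lambda_iL_n)^2=O(L_n)$, Markov's inequality gives $\sum_u(a_u-\lambda_iL_n)^2=O_{\Prob}(nL_n)$. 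Together with $a_u\geq cL_n$, the term is $O_{\Prob}(\sqrt{nL_n}\cdot L_n/(L_n^2\sqrt n))=O_{\Prob}(L_n^{-1/2})$.

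Summing the four pieces gives $O_{\Prob}(L_n^{-1/2})$, which tends to $0$ because $L_n\geq b\ln n\to\infty$. The main obstacle is the centered spectral bound at the logarithmic scale, where the sharp $\sqrt{L_n}$ rate (rather than $\sqrt{L_n\ln n}$) requires the refined random-matrix estimate. The remaining steps are routine.
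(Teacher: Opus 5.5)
Your proof is correct, and its overall structure matches the paper's. Both arguments pull out a factor $1/\min_u a_u$, both use the sharp bound $\norm{A^\circ-\Expect A^\circ}_2=O_{\Prob}(\sqrt{L_n})$ at the logarithmic scale (the paper cites Lei--Rinaldo, Theorem~5.2), and both treat the self-loops and the diagonal of $P_R$ as negligible corrections.

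The genuine difference is how the degree-fluctuation term is handled.

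\begin{itemize}
\item \textbf{The paper.} It sets $M:=\weight-D_{\bdlambda}P_R$ and uses $P_R\ones=\ones$ and $\weight\ones=D\ones$ to get the identity $D-D_{\bdlambda}=\diag{M\ones}$. Together with the entrywise bound $P_R(u,v)\leq w_{\max}/n$ and $\norm{M\ones}_2\leq\sqrt n\,\norm{M}_2$, this gives deterministically $\norm{(D-D_{\bdlambda})P_R}_2\leq w_{\max}\norm{M}_2$. The whole error therefore collapses onto the single spectral-norm estimate.
\item \textbf{Your route.} You bound $\sum_u(a_u-\lambda_{i(u)}L_n)^2=O_{\Prob}(nL_n)$ directly, by a variance computation and Markov's inequality. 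You then combine this with the block structure of the mean.
\end{itemize}

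Both routes give $O_{\Prob}(L_n^{-1/2})$ for this term. The quantity you control is exactly $\norm{M\ones}_2^2$, which the paper bounds by $n\norm{M}_2^2$. Your route is more elementary for this piece and does not lean on the random-matrix bound there. The paper's route is more economical: one random input, with everything else holding pathwise.

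One small imprecision: with zero diagonal, $\Expect A^\circ$ is not exactly rank $m$. The discrepancy is $(D^{-1}-\bar D^{-1})\,\diag{Q_{ii}}$, and on the event $\min_u a_u\geq cL_n$ its norm is $O(L_n^{-1})\cdot O(L_n/n)=O(1/n)$, so the conclusion is unaffected.
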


\begin{proof}
Let $A^\circ$ be the adjacency matrix of the sampled graph, so $\weight=A^\circ+I$.
Let $D_{\bdlambda}$ be diagonal with $D_{\bdlambda}(u,u)=\lambda_iL_n$ for $u\in H_i$.
Set $M:=\weight-D_{\bdlambda}P_R$.
Since $P_R\ones=\ones$ and $\weight\ones=D\ones$,
\[
    D-D_{\bdlambda}=\diag{M\ones}.
\]
Each entry of $P_R$ is at most $w_{\max}/n$, so, writing $\norm{\cdot}_{\rm F}$ for the Frobenius norm,
\begin{align*}
    \norm{(D-D_{\bdlambda})P_R}_2
    &\leq\norm{\diag{M\ones}P_R}_{\rm F}
    \leq\frac{w_{\max}}{\sqrt n}\norm{M\ones}_2
    \leq w_{\max}\norm{M}_2.
\end{align*}
Moreover, $\Expect \weight-D_{\bdlambda}P_R$ is diagonal, with entry $1-Q_{ii}$ at each $u\in H_i$, and hence has operator norm at most one.
Therefore
\begin{align*}
    \norm{D^{-1}\weight-P_R}_2
    &\leq \frac{\norm{W - D P_R}_2}{\min_{u \in [n]} a_u}
    \leq\frac{\norm{M}_2+\norm{(D_{\bdlambda}-D)P_R}_2}{\min_{u \in [n]} a_u}
    \leq \frac{1+w_{\max}}{\min_{u \in [n]} a_u} \cdot \norm{M}_2 \\
    &\leq \frac{1+w_{\max}}{\min_{u \in [n]} a_u}\bigl(\norm{W - \Expect W}_2 + \norm{\Expect W - D_{\bdlambda} P_R}_2\bigr)
    \leq\frac{1+w_{\max}}{\min_{u \in [n]} a_u}\bigl(\norm{A^\circ-\Expect A^\circ}_2+1\bigr).
\end{align*}
By \cite[Theorem~5.2]{lei2015consistency}, applied with $\Delta_n=c_{\max}L_n$, we have $\norm{A^\circ-\Expect A^\circ}_2=O_{\Prob}(\sqrt{L_n})$.
Its hypotheses hold because the undirected edges are independent, $n\max_{u\ne v}\Pr(\{u,v\}\in E)\leq\Delta_n$, and $\Delta_n\geq c_{\max}b\ln n$.
\footnote{See also~\cite{feige2005spectral} for the analogous bound for Erd\H{o}s--R\'enyi graphs.}
The assumed lower bound $\min_u a_u\geq cL_n$ with high probability therefore yields $ \norm{D^{-1}\weight-P_R}_2=O_{\Prob}(L_n^{-1/2})\toprob0, $ as desired.
\end{proof}

Next, we record two deterministic estimates that hold simultaneously for every $S\subseteq[n]$.
The operator norm gives $\norm{(D^{-1}\weight-P_R)\ones_S}_2\leq\rho\sqrt{|S|}$, while each entry of $P_R$ is at most $w_{\max}/n$, so $   \norm{P_R\ones_S}_2^2\leq\frac{w_{\max}^2|S|^2}{n}. $
Consequently,
\begin{align}
    \norm{D^{-1}\weight\ones_S}_2^2
    \leq 2\norm{(D^{-1}\weight-P_R)\ones_S}_2^2+2\norm{P_R\ones_S}_2^2
    \leq 2\left(\rho^2+w_{\max}^2\frac{|S|}{n}\right)|S|.
    \label{eqn:sbm-sum-hdpu-case1-quad}
\end{align}
Also, the $d$-Lipschitz property of $h_d$ and Cauchy--Schwarz give
\begin{align}
    \left|\sum_{u=1}^n h_d\left(\frac{\weight(u,S)}{a_u}\right)-\sum_{u=1}^n h_d(P_R(u,S))\right|
    &\leq d \sum_{u=1}^n \left| \frac{W(u, S)}{a_u} - P_R(u, S) \right| \notag \\
    &\leq d\sqrt n\,\norm{(D^{-1}\weight-P_R)\ones_S}_2
    \leq d\rho\sqrt{n|S|}\leq d\rho n.
    \label{cor:det-con-sbm-mixing}
\end{align}

To pass from whole communities to arbitrary cuts, we extend the block condition to fractional community indicators.

\begin{lemma}\label{lem:equi-con-bc-general-sbm}
    If \eqref{eqn:bc-general-sbm} holds, then
    \begin{align*}
        \sum_{i=1}^m \alpha_i h_d((R\bdx)_i)
        > \sum_{i=1}^m \alpha_i x_i, \quad \forall \bdx \in [0, 1]^m \setminus \{\zeros, \ones\}.
    \end{align*}
\end{lemma}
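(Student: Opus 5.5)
The plan is to view the left-minus-right side as a concave function on the cube and reduce to its values at the vertices. Define
\[
    g(\bdx):=\sum_{i=1}^m\alpha_i h_d((R\bdx)_i)-\sum_{i=1}^m\alpha_i x_i,\qquad \bdx\in[0,1]^m .
\]

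\textbf{Concavity and endpoints.} Since $h_d(z)=1-(1-z)^d$ is concave on $[0,1]$ and $\bdx\mapsto(R\bdx)_i$ is linear, $g$ is concave on $[0,1]^m$. Because $R$ is row-stochastic, $R\bdx$ lies in $[0,1]^m$ whenever $\bdx$ does, so $g$ is well defined there. Using $R\zeros=\zeros$, $R\ones=\ones$, $h_d(0)=0$, $h_d(1)=1$ and $\sum_i\alpha_i=1$, we get $g(\zeros)=g(\ones)=0$. By \eqref{eqn:bc-general-sbm}, $g(\bdx)>0$ at every other vertex $\bdx\in\{0,1\}^m\setminus\{\zeros,\ones\}$.

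\textbf{Diagonal case.} Take $\bdx=t\ones$ with $t\in(0,1)$. Then $R\bdx=t\ones$, so $g(t\ones)=h_d(t)-t$. For $d\geq2$, $h_d$ is strictly concave with $h_d(0)=0$ and $h_d(1)=1$, hence $h_d(t)>t$ on $(0,1)$. This case also covers $m=1$ entirely.

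\textbf{Off-diagonal points.} Now let $\bdx\in[0,1]^m$ not lie on the diagonal $\{t\ones\}$.
\begin{enumerate}
\item Order the coordinates as $x_{\pi(1)}\geq\cdots\geq x_{\pi(m)}$ and set $S_k:=\{\pi(1),\ldots,\pi(k)\}$. Use the layer-cake (staircase) decomposition
\[
    \bdx=\sum_{k=0}^{m}\bigl(x_{\pi(k)}-x_{\pi(k+1)}\bigr)\ones_{S_k},
\]
with the conventions $x_{\pi(0)}:=1$ and $x_{\pi(m+1)}:=0$. The weights are nonnegative and sum to one, so this is a convex combination of vertices whose extremes are $\ones_{S_0}=\zeros$ and $\ones_{S_m}=\ones$.
\item Since $\bdx$ is not constant, some consecutive gap $x_{\pi(k)}-x_{\pi(k+1)}$ with $1\leq k\leq m-1$ is positive. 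So a vertex other than $\zeros,\ones$ receives positive weight.
\item By concavity (Jensen), $g(\bdx)\geq\sum_k(x_{\pi(k)}-x_{\pi(k+1)})\,g(\ones_{S_k})$. Every term is nonnegative and at least one is strictly positive, so $g(\bdx)>0$.
\end{enumerate}

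The only subtle point is the diagonal, where the staircase representation uses only $\zeros$ and $\ones$ and so gives no strictness. That case is handled separately above using strict concavity of $h_d$ for $d\geq2$.
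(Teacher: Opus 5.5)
Your proof is correct, but it obtains strictness differently from the paper. The paper first uses concavity and a convex combination of vertices to get $F\geq 0$ on the whole cube. For a non-vertex $\bdx$, it then picks a fractional coordinate $x_j\in(0,1)$ and uses that $F$ is \emph{strictly} concave along the $j$-th coordinate line, which needs $d\geq2$ and some $R(i,j)>0$. This gives $F(\bdx)>(1-x_j)F(\bdx^{(0)})+x_jF(\bdx^{(1)})\geq0$, treating all non-vertex points uniformly in one step with no diagonal/off-diagonal split.

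Your staircase decomposition instead writes $\bdx$ as a convex combination along a chain $\zeros=\ones_{S_0}\leq\ones_{S_1}\leq\cdots\leq\ones_{S_m}=\ones$. For every non-constant $\bdx$, some nontrivial vertex then gets positive weight, so strictness follows from plain concavity and \eqref{eqn:bc-general-sbm} alone. The curvature of $h_d$ enters only on the diagonal, where it reduces to the scalar inequality $h_d(t)>t$, and there you need only $R\ones=\ones$, not positivity of the entries of $R$. This is not a real gain in generality: \eqref{eqn:bc-general-sbm} at $\bdx=\boldsymbol e_j$ already forces every column of $R$ to be nonzero, which is all the paper's coordinate-line argument needs.

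A genuine side benefit of your route is a quantitative bound off the diagonal:
\[
g(\bdx)\geq\Bigl(\max_i x_i-\min_i x_i\Bigr)\min_{\bdy\in\{0,1\}^m\setminus\{\zeros,\ones\}}g(\bdy).
\]
The paper does not need this, since it later extracts a uniform margin on $\Gamma$ by compactness.
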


\begin{proof}
Define
\begin{align}
    F(\bdx):=\sum_{i=1}^m\alpha_i h_d((R\bdx)_i)-\sum_{i=1}^m \alpha_i x_i.
    \label{eqn:def-fx-diff-cut}
\end{align}
The function $F$ is concave with $F(\zeros)=F(\ones)=0$.
In addition, by \eqref{eqn:bc-general-sbm}, $F(\bdx) > 0$ for every $\bdx \in \{0, 1\}^m \setminus \{\zeros, \ones\}$.
Since every point of $[0,1]^m$ can be written as a convex combination of points in $\{0, 1\}^m$, we know that $F$ is nonnegative throughout $[0, 1]^m$.
Fix $\bdx \in [0, 1]^m \setminus \{0, 1\}^m$, and we show that $F(\bdx) > 0$.
Choose $j \in [m]$ such that $x_j\in(0,1)$, and let $\bdx^{(0)}$ and $\bdx^{(1)}$ be obtained by replacing the $j$-th coordinate of $\bdx$ by zero and one, respectively.
Since $d\geq2$ and every entry of $R$ is strictly positive, $F$ is strictly concave along each coordinate line.
Hence $ F(\bdx)>(1-x_j)F(\bdx^{(0)})+x_jF(\bdx^{(1)})\geq0, $ concluding the proof.
\end{proof}

Recall that $\rho=\| D^{-1}\weight-P_R \|_2$ measures the error of the block approximation.
The following lemma combines a small $\rho$, a positive exposure margin above $1/d$, and \eqref{eqn:bc-general-sbm} to deterministically verify \eqref{eq:er-closed-neighborhood-cut-SBM} for every nonempty proper cut.

\begin{lemma}\label{lem:deterministic-sbm-cuts}
 Suppose \eqref{eqn:bc-general-sbm} holds.
 For every $\eta>0$ there is $\epsilon>0$ such that $\min_uB_u\geq1/d+\eta$ and $\rho\leq\epsilon$ imply \eqref{eq:er-closed-neighborhood-cut-SBM} for every nonempty proper cut.
\end{lemma}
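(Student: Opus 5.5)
The plan is to split nonempty proper cuts $S$ into three regimes by size. We fix $\delta>0$ first, then choose $\epsilon$ depending on $\eta$ and $\delta$. Throughout we use the column identity
\[
\sum_{u=1}^n \frac{\weight(u,S)}{a_u}=\sum_{v\in S}B_v,
\]
together with the deterministic estimates \eqref{eqn:sbm-sum-hdpu-case1-quad} and \eqref{cor:det-con-sbm-mixing}. Since $p_u=1/n$, \eqref{eq:er-closed-neighborhood-cut-SBM} for $S$ amounts to $\sum_u h_d(y_u)>|S|$ with $y_u:=\weight(u,S)/a_u\in[0,1]$.

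\emph{Small cuts, $|S|\leq\delta n$.} Inclusion--exclusion gives $h_d(x)\geq dx-\binom d2x^2$ on $[0,1]$. Summing over $u$ and using $\min_v B_v\geq 1/d+\eta$ together with \eqref{eqn:sbm-sum-hdpu-case1-quad} gives
\[
\sum_u h_d(y_u)\;\geq\; d\sum_{v\in S}B_v-\binom d2\norm{D^{-1}\weight\ones_S}_2^2
\;\geq\;(1+d\eta)|S|-2\binom d2\bigl(\rho^2+w_{\max}^2\delta\bigr)|S|.
\]
This exceeds $|S|$ as soon as $2\binom d2(\rho^2+w_{\max}^2\delta)<d\eta$. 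This requirement constrains $\delta$ (and $\epsilon$) in terms of $\eta$ only.

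\emph{Nearly full cuts, $|T|\leq\delta n$ where $T:=[n]\setminus S$.} Here we rewrite the condition in terms of the complement. Since $\weight(u,S)/a_u=1-z_u$ with $z_u:=\weight(u,T)/a_u$, we have $h_d(1-z_u)=1-z_u^d$. The condition thus becomes $\sum_u z_u^d<|T|$. Because $d\geq2$ and $z_u\in[0,1]$, we get $z_u^d\leq z_u^2$. Then \eqref{eqn:sbm-sum-hdpu-case1-quad} applied to $T$ gives $\sum_u z_u^d\leq 2(\rho^2+w_{\max}^2\delta)|T|<|T|$, provided $2(\rho^2+w_{\max}^2\delta)<1$. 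This regime does not even need the exposure margin.

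\emph{Balanced cuts, $\delta n<|S|<(1-\delta)n$.} Set $x_j:=|S\cap H_j|/n_j\in[0,1]$. Then $P_R(u,S)=(R\bdx)_i$ for $u\in H_i$, so
\[
\sum_u h_d(P_R(u,S))=n\sum_i\alpha_ih_d((R\bdx)_i),\qquad |S|=n\sum_i\alpha_ix_i.
\]
The vector $\bdx$ lies in the compact set $K_\delta:=\{\bdx\in[0,1]^m:\delta\leq\sum_i\alpha_ix_i\leq1-\delta\}$, which avoids $\zeros$ and $\ones$. By \Cref{lem:equi-con-bc-general-sbm}, the continuous function $F$ of \eqref{eqn:def-fx-diff-cut} is strictly positive on $K_\delta$, so it attains a minimum $\gamma(\delta)>0$ there. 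Combining with \eqref{cor:det-con-sbm-mixing},
\[
\sum_u h_d(y_u)-|S|\;\geq\; nF(\bdx)-d\rho n\;\geq\; n\bigl(\gamma(\delta)-d\rho\bigr),
\]
which is positive if $\rho<\gamma(\delta)/d$.

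Finally, choose $\delta$ so that $4\binom d2 w_{\max}^2\delta<d\eta$ and $4w_{\max}^2\delta<1$. Then choose $\epsilon$ small enough that all three constraints hold whenever $\rho\leq\epsilon$. The main obstacle is the balanced regime: all the uniformity comes from compactness of $K_\delta$ and the strict positivity supplied by \Cref{lem:equi-con-bc-general-sbm}. The key point is that $\gamma(\delta)$ depends only on the fixed model parameters and $\delta$, not on $n$. The other subtlety is that $F$ vanishes at $\ones$, so a direct compactness argument cannot cover nearly full cuts. That is exactly why they are handled separately, via the complement identity $h_d(1-z)=1-z^d$.
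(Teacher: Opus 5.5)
Your proof is correct and follows the paper's argument essentially step for step. It uses the same three-regime split by $|S|/n$, with your $\delta$ playing the role of the paper's $\theta_0$. For small cuts you use the bound $h_d(x)\geq dx-\binom d2x^2$ together with \eqref{eqn:sbm-sum-hdpu-case1-quad}; for balanced cuts, compactness with \Cref{lem:equi-con-bc-general-sbm} and \eqref{cor:det-con-sbm-mixing}; for nearly full cuts, the complement identity $h_d(1-z)=1-z^d$ with $z^d\leq z^2$, exactly as the paper does.
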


\begin{proof}
First choose sufficiently small $\theta_0\in(0,1/2)$, and then sufficiently small $\epsilon>0$ such that
\begin{align}
    2\binom d2(\epsilon^2+w_{\max}^2\theta_0)<d\eta,
    \qquad 2(\epsilon^2+w_{\max}^2\theta_0)<1.
    \label{eqn:con-choose-thm-3}
\end{align}
Suppose $\rho\leq\epsilon$ and $\min_{u\in[n]}B_u\geq1/d+\eta$.
Then, for every nonempty $S\subseteq[n]$,
\begin{align}
    \sum_{u=1}^n\frac{\weight(u,S)}{a_u}=\sum_{v\in S}B_v
    \geq |S| \cdot \min_{u \in S} B_u
    \geq\left(\frac1d+\eta\right)|S|.
    \label{eq:q-sum-lower-SBM}
\end{align}
We verify \eqref{eq:er-closed-neighborhood-cut-SBM} for every nonempty proper cut $S\subseteq[n]$.

\paragraph{Case 1: $0<|S|/n<\theta_0$.}
Since $h_d(x)\geq dx-\binom d2x^2$ for $x\in[0,1]$, \eqref{eqn:sbm-sum-hdpu-case1-quad} and \eqref{eq:q-sum-lower-SBM} give
\begin{align*}
    \sum_{u=1}^n h_d\left(\frac{\weight(u,S)}{a_u}\right)
    &\geq d\sum_{u=1}^n \frac{W(u, S)}{a_u}-\binom d2\norm{D^{-1}\weight\ones_S}_2^2\\
    &\geq\left[1+d\eta-2\binom d2\left(\rho^2+w_{\max}^2\frac{|S|}{n}\right)\right]|S|
    >|S|,
\end{align*}
where the last inequality holds by the first condition in \eqref{eqn:con-choose-thm-3}.

\paragraph{Case 2: $\theta_0\leq|S|/n\leq1-\theta_0$.}
Consider the function $F$ defined in \eqref{eqn:def-fx-diff-cut}.
Since the set
\[
    \Gamma:=\{\bdx\in[0,1]^m:\boldsymbol\alpha^{\mathsf T}\bdx\in[\theta_0,1-\theta_0]\}
\]
is compact, and $F(\bdx) > 0$ for every $\bdx \in \Gamma$ by \Cref{lem:equi-con-bc-general-sbm}, we can define $\delta:=\min_{\bdx\in\Gamma}F(\bdx)>0$.
Set $x_i^S:=|S\cap H_i|/n_i$, so $\bdx^S\in\Gamma$ and $P_R(u,S)=(R\bdx^S)_i$ for $u\in H_i$.
It holds that
\begin{align*}
    \sum_{u=1}^n h_d\left(\frac{\weight(u,S)}{a_u}\right)-|S|
    &\geq \sum_{u=1}^n h_d(P_R(u,S)) - |S| - d \rho n  \\
    &\geq nF(\bdx^S)-d\rho n
    \geq n(\delta-d\rho)>0,
\end{align*}
where the first inequality holds by \eqref{cor:det-con-sbm-mixing}, the second inequality holds since $\sum_{i=1}^m \alpha_i x_i^S = |S| / n$, and the last inequality holds by choosing $\epsilon$ small enough so that $\rho \leq \epsilon<\delta/d$.

\paragraph{Case 3: $1-\theta_0<|S|/n<1$.}
Let $T:=[n]\setminus S$, so $0<|T|<\theta_0n$.
Since $d\geq2$, it holds that
\[
    \sum_{u=1}^n\left(\frac{\weight(u,T)}{a_u}\right)^d
    \leq\norm{D^{-1}\weight\ones_T}_2^2
    \leq2\left(\rho^2+w_{\max}^2\frac{|T|}{n}\right)|T|<|T|,
\]
where the second inequality holds by \eqref{eqn:sbm-sum-hdpu-case1-quad}.
As $h_d(\weight(u,S)/a_u)=1-(\weight(u,T)/a_u)^d$, this is precisely the desired cut inequality for $S$.
\end{proof}

Now, we are ready to prove \Cref{thm:stability-sbm}.

\begin{proof}[Proof of \Cref{thm:stability-sbm}]
Define $J_* := \min_{i \in [m]} I_i(1 / d)$.
Assume \eqref{eqn:bc-general-sbm} holds and $\liminf_{n\to\infty}L_nJ_*/\ln n>1$.
For $m\geq2$, applying \eqref{eqn:bc-general-sbm} to $\bdx=\boldsymbol e_j$ gives
\[
 \alpha_j<\sum_{i=1}^m\alpha_i h_d(R(i,j))
 <d\sum_{i=1}^m\alpha_iR(i,j)=d\alpha_j\xi_j,
\]
where the second inequality holds since $h_d(z)<dz$ for $0<z\leq1$, and the equality holds by the symmetry of $(c_{ij})$.
For $m=1$, $\xi_1=1>1/d$ directly.
Thus $\xi_i>1/d$ for every $i \in [m]$.
By \Cref{lem:char-Iit}, $I_i$ is strictly decreasing on $[0,\xi_i]$ with $I_i(\xi_i)=0$, so $I_i(1/d)>0$ for every $i$.
Taking the minimum over the finitely many communities gives $J_*>0$.
Choose $b>0$ with $1<bJ_*<\liminf_{n\to\infty}L_nJ_*/\ln n$, and then choose $t\in(1/d,\min_i\xi_i)$ so that $b\min_iI_i(t)>1$.
Then $L_n\geq b\ln n$ eventually, and \eqref{eqn:weighted-degree-tail} and a union bound imply $\min_uT_u>t$ with high probability.
By \eqref{eqn:sbm-one-sided-exposure}, $\min_uB_u\geq1/d+\eta$ with high probability for some fixed $\eta>0$.
Also,
\[
 a_u\geq|N(u)|\geq\lambda_{\min}L_nT_u
 \geq\lambda_{\min}L_nt
\]
on the same event.
\Cref{lem:sbm-mixing} gives $\rho\toprob0$, so \Cref{lem:deterministic-sbm-cuts} verifies every strict cut inequality.
Stability follows from \Cref{thm:min-token-backpressure}.
\end{proof}

\subsection{Proof of \Cref{thm:special-equal-degree-coeffi}}\label{sec:proof-cor-special-equal-degree-coeffi}

The Erd\H{o}s--R\'enyi graph $G(n,\lambda\ln n/n)$ is the single-community case of the stochastic block model, with $m=1$, $\alpha_1=1$, and $c_{11}=\lambda$.
Thus $\lambda_1=\lambda$ and $R=(1)$, and \eqref{eqn:bc-general-sbm} is vacuous because there are no nontrivial block indicators.
By \Cref{lem:char-Iit}, the exposure rate function admits the simple form $I_1(t)=\lambda(1-t+t\ln t)$ for $t \in [0, 1]$.
As a result, $ I_1(1/d)>1$ if and only if $\lambda>\lambda_d^*, $ with the same equivalence for the reverse inequalities.
\Cref{thm:instability-sbm} and \Cref{thm:stability-sbm} with $L_n=\ln n$ then conclude the instability and stability claims, respectively.

\bibliographystyle{alpha}
\bibliography{references}

\end{document}